\documentclass[11pt,reqno]{amsart}
\usepackage{amsmath,amsfonts,amsthm,amssymb,amsxtra}
\usepackage[colorlinks,citecolor=red,hypertexnames=false]{hyperref}

\usepackage{color}
\usepackage{graphicx,subfigure}
\usepackage{float}
\usepackage{tikz}
\usetikzlibrary{calc}
  
\usepackage[shortlabels]{enumitem}

\usepackage{bbm}       
\usepackage{stmaryrd}  
\usepackage{nicematrix}
\usepackage{mathrsfs}  
\usepackage{marginnote}

\usepackage[margin=1.1in]{geometry}
\newtheorem{theorem}{Theorem}[section]

\newtheorem{conjecture}[theorem]{Conjecture}
\newtheorem{proposition}[theorem]{Proposition}
\newtheorem{lemma}[theorem]{Lemma}
\newtheorem{corollary}[theorem]{Corollary}

\theoremstyle{definition}

\newtheorem{assumption}{Assumption}

\theoremstyle{remark}
\newtheorem{remark}{Remark}[section]

\numberwithin{equation}{section}
\allowdisplaybreaks

\newcommand{\C}{\mathbb{C}}

\newcommand{\E}{\mathbb{E}}

\newcommand{\N}{\mathbb{N}}
\newcommand{\R}{\mathbb{R}}

\newcommand{\V}{\mathbb{V}}
\newcommand{\Z}{\mathbb{Z}}
\renewcommand{\P}{\mathbb{P}}

\DeclareMathOperator{\dist}{dist}

\DeclareMathOperator{\supp}{supp}

\newcommand{\ipc}[2]{ \langle #1 , #2  \rangle }

\newcommand{\braket}[3]{\left \langle #1 \, | #2 |\, #3 \right \rangle }

\newcommand{\bra}[1]{\left < #1 \right |}
\newcommand{\ket}[1]{\left | #1 \right >}

\newcommand{\one}{\mathbbm{1}}

\newcommand{\vol}{{\rm Vol}}

\newcommand{\vertiii}[1]{{\left\vert\kern-0.25ex\left\vert\kern-0.25ex\left\vert #1
    \right\vert\kern-0.25ex\right\vert\kern-0.25ex\right\vert}}

\renewcommand{\epsilon}{\varepsilon}
\renewcommand{\phi}{\varphi}

\newcommand{\eps}{\varepsilon}

\newcommand{\wt}{\widetilde}

 \renewcommand{\emptyset}{\varnothing}

\begin{document}

 \title[Anderson localization near the edge]{Edge localization and Lifshitz tails for graphs with Ahlfors regular volume growth}

\author[L. Shou, W. Wang, S. Zhang]{Laura Shou, Wei Wang, Shiwen Zhang}

\begin{abstract}
In this work, we study the Anderson model on graphs with Ahlfors $\alpha$‑regular volume growth. We show that, under mild regularity assumptions of the random distribution, Lifshitz‑tail type estimates near the bottom of the spectrum lead to exponential decay of fractional moments of the Green’s function and thus spectral and dynamical localization at low energies. This generalizes the result of \cite{aizenman01finitevol} from the lattice \(\Z^d\) to Ahlfors $\alpha$‑regular graphs. In addition, we establish Lifshitz tail estimates  for the integrated density of states, with the Lifshitz exponent determined by the
ratio of the volume growth rate and the random walk dimension of the underlying graphs, under certain assumptions on low lying eigenvalues of the Dirichlet and Neumann Laplacian on the graph. As an application, we verify all conditions on the Sierpinski gasket graph and obtain that, under mild regularity assumptions of the random distribution, for any fixed disorder, the Anderson model on the Sierpinski gasket graph has pure point spectrum and exhibits strong dynamical localization near the bottom of the spectrum.
 
\end{abstract}

  \maketitle
\tableofcontents

\section{Introduction}
In this work, we consider the Anderson model on  an (unweighted) graph $(\mathbb{V}, \mathcal{E})$, where the vertex set $\mathbb{V}$ is countably infinite. We write $x \sim y$ to indicate  that $\{x, y\} \in \mathcal{E}$, i.e., there is an edge between $x,y$.   The Anderson model is a family of random Schr\"odinger operators 
on ${\mathcal H}:=\ell^2(\V)$, defined as 
\begin{align}\label{eqn:AM}
    H_\omega f(x)= (-\Delta f+V_\omega)f(x)=\sum_{y\in \V:y\sim x}\big(f(x)- f(y)\big) + V_\omega(x)f(x),\  \ x \in \V,
\end{align}
where the first summation represents the (negative, combinatorial) graph Laplacian \(-\Delta\) on the graph, and $V_\omega=\{V_\omega(x)\}_{x\in \V}$ is a random potential.
The random potential $V_\omega=\{V_\omega(x)\}_{x\in \V}$ consists of independent, identically distributed (i.i.d.) random variables with a common distribution $P_0$ on $\R$, induced by a probability space $(\Omega,{\mathcal F},\P)$. We denote by ${\rm supp}P_0$ the (essential) support of the measure $P_0$, that is,
\begin{align}
    {\rm supp}P_0=\big\{\, E \in \R\, |\, P_0(E-\eps,E+\eps)>0 \ {\rm for\ all\ } \eps>0\big\}. 
\end{align}
Without loss of generality, we assume that $\inf {\rm supp}P_0=0$ and $p_0=P_0(0)<1$, i.e., the distribution is non-trivial and contains more than one point.

The graph $(\V,\mathcal E)$ is equipped with the natural graph metric, denoted by $d(\cdot,\cdot)$, which represents the length $n \in \mathbb{N}_0$ of the shortest path between two points. For $x \in \mathbb{V}$ and $r \ge 0$, define balls in $(\V,\mathcal E)$ with respect to $d$ as $B(x,r) = B(x,\lfloor r \rfloor)= \{ y \in \mathbb{V} : d(x,y) \le r \}=B(x,\lfloor r \rfloor),$ where $\lfloor \cdot \rfloor$ denotes the floor function. For any subset $S \subset \mathbb{V}$, we denote its cardinality by $|S| = \#\{ x : x \in S \}. $
Throughout this article, we work with graphs satisfying a polynomial volume growth condition: there exist $\alpha \ge 1$ and constants $c_1,c_2>0$ such that for any $x\in \V$ and $r\ge1$, 
        \begin{align}\label{eqn:vol-control}
       c_1r^{\alpha} \le  |B(x,r)| \le c_2 r^{\alpha}.
        \end{align}
This property is also referred to as ``Ahlfors $\alpha$-regular'' in \cite{barlow-alphabeta}, where the exponent $\alpha$ serves as the graph-theoretic analogue of the Hausdorff dimension. Note that \eqref{eqn:vol-control} also implies bounded geometry: letting $\deg(x) = \#\{ y : x \sim y \}$ be the degree of a vertex $x \in \V$, we have
    \begin{align}\label{eqn:bded-geo}
        \sup_{x\in \V}{\deg(x)}:=M<\infty.
    \end{align}  
When using \eqref{eqn:vol-control} or \eqref{eqn:bded-geo}, we may suppress the dependence on $c_1, c_2$, and $M$, focusing primarily on the parameter $\alpha$.

Random operators are known to exhibit spectral regimes where eigenvalues are dense and associated eigenfunctions are exponentially localized. There are two main approaches to studying localization in random operators: the \emph{multiscale analysis} (MSA), introduced by Fr\"ohlich and Spencer \cite{frohlich83}, and the \emph{fractional moments method} (FMM), introduced by Aizenman and Molchanov \cite{aizenman1993localization}. In this work, we adopt the latter approach, which is based on analyzing fractional moments of the Green's function. The FMM approach requires certain mild regularity assumptions on the probability distribution $P_0$. 
For later convenience, we define several assumptions in this section, which we may later reference in theorem statements.
\begin{assumption} \label{assume:holder}
 Given $0<\tau\le 1$,  the probability distribution $P_0$ is said to be (uniformly) $\tau$-H\"older continuous if for some $\kappa_\tau>0$,
 \begin{align}\label{eqn:tau-holder}
      \sup_{E\in \R} \, P_0\big([E-t,E+t]\big) \le \kappa_\tau\, t^{\tau}, \ \ t\ge 0. 
 \end{align}
\end{assumption}
\begin{remark}
Neither independence nor $\tau$-H\"older regularity is essential for the FMM method. Both conditions can be relaxed to more general assumptions; see, for example, \cite[Chapter 10]{aizenman2015random}.
\end{remark}

The FMM has been developed for various models and localization regimes, yielding increasingly stronger conclusions. In the present work, we focus on the localization of eigenfunctions of $H_\omega$ near the bottom of its spectrum.  It is well known that the density of states of random Schr\"odinger operators on $\ell^2(\mathbb{Z}^d)$ (or $L^2(\mathbb{R}^d)$) exhibits the so-called Lifshitz tail behavior near the bottom of the spectrum; see, e.g., \cite{kirsch2007invitation}. The close connection between Lifshitz tails and localization dates back to \cite{lif1965} in the physics literature. A rigorous proof of localization near the band edge using Lifshitz tails via the multiscale analysis approach was first given in \cite{martinelli84} and further developed in \cite{kirsch98,kirsch06}. In \cite{aizenman01finitevol}, the authors show that for random Schr\"odinger operators on $\ell^2(\mathbb{Z}^d)$, the fractional moments localization condition is satisfied in the regime where localization can be established via the (Lifshitz tail + MSA) approach. In this work, we aim to extend the approach (Lifshitz tail + FMM $\Rightarrow$ localization), developed in \cite{aizenman01finitevol} for the regular lattice $\mathbb{Z}^d$, to an ``$\alpha$-dimensional'' graph or lattice satisfying \eqref{eqn:vol-control}. Following \cite{aizenman01finitevol}, we assume the following fast power decay, which is slightly weaker than the usual (sub)exponential Lifshitz tail.

\begin{assumption}[Fluctuation Boundary: Fast Power-Decay Tails]\label{assume:lif}
Let $H^{B_R}$ denote the restriction of $H_\omega$ to a ball $B_R = B(x_0, R) \subset \V$ centered at $x_0 \in \V$ with radius $R$. Denote by $\inf \sigma(H^{B_R})$ the smallest eigenvalue of $H^{B_R}$. We say that $H_\omega$ exhibits a fast power-decay on $B_R$ if there exist constants $\delta, \alpha_0> 0$ such that
\begin{align}\label{eqn:power-decay}
        \P\big(\inf \sigma(H^{B_R})\le R^{-\delta}\big)\le   R^{-\alpha_0}.
\end{align}   
\end{assumption}

For $z \in \mathbb{C} \setminus \sigma(H_\omega)$, denote by $G(x,y;z)$ the Green's function, which is the kernel of the resolvent of $H_\omega$:
\begin{align}\label{eqn:Green}
    G(x,y;z):=\bra{x}(H_\omega-z)^{-1}\ket{y}=\ipc{\delta_x}{(H_\omega-z)^{-1}\delta_y}, 
\end{align}
with $G^{\Lambda}(x,y;z)$ defined analogously for any subset $\Lambda \subset \V$,  and where we used the usual inner product  $\langle f,g\rangle:=\sum_{x\in \V} {\bar f(x)}g(x)$ on \(\ell^2(\V;\C)\).
The subspaces $\ell^2(A)$ are defined accordingly for any finite subset $A\subset \V$. We often use the bra-ket convention to write $L(x,y)=\braket{x}{L}{y}=\langle \delta_x,L\delta_y\rangle$ for the kernel of an operator $L$.
The first result of this paper establishes that the fast power-decay condition (for a sufficiently large ball) implies decay of the Green's function.

\begin{theorem}\label{thm:FMbound}
Let $H_\omega$ be the Anderson model given in \eqref{eqn:AM} on a graph $(\V, \mathcal{E})$ satisfying \eqref{eqn:vol-control} for some $\alpha > 0$. Assume that:
\begin{enumerate}
\item Assumption~\ref{assume:holder} holds with $\tau \in (0,1)$ and $\kappa_\tau > 0$.
\item For some $\delta \in (0,1)$, $\alpha_0 > 3\alpha$, and $R$ sufficiently large (depending on $\delta, \alpha_0, \tau$), such that for any $x_0 \in \V$,  Assumption~\ref{assume:lif} holds on $B(x_0, R)$.
\end{enumerate}

Then:
\begin{enumerate}
\item There exist constants $\mu, C, E_0 > 0$ and $s \in (0, \tau/2)$ such that
\begin{align}\label{eqn:FM-bound}
\mathbb{E}\Big( \big| G(x,y; E + i\varepsilon) \big|^s \Big) \le C e^{-\mu\, d(x,y)}
\end{align}
for all $x,y \in \V$, $E \le E_0$, and $\varepsilon \neq 0$.

   \item If, in addition, $\tau = 1$ and $P_0$ has compact support, then the spectrum of $H_\omega$ within $I = [0, E_0]$ is either empty or pure point with strong dynamical localization on $I$. That is, for some $D_I > 0$ and all $R > 0$,  
\begin{align}\label{eqn:DAL}  
    \sup_{x \in \V} \sum_{\substack{y \in \V \\ d(x,y) \ge R}} \mathbb{E}\Big( \sup_{t \in \mathbb{R}} \big| \ipc{\delta_x}{P_I(H_\omega)e^{- i t H_\omega}\delta_y} \big|^2 \Big) \le D_I e^{-\mu R},  
\end{align}  
where $P_I(H_\omega)$ is the spectral projection of $H_\omega$ onto the interval $I$.  
    \end{enumerate}
\end{theorem}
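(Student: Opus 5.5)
We will follow the finite-volume criterion strategy of \cite{aizenman01finitevol}, adapted to the metric and volume structure \eqref{eqn:vol-control}. The core of the argument is a finite-volume criterion. If for some $R$ and some ball $B_R=B(x_0,R)$ the quantity
\begin{align*}
  \kappa(R):=C_{s,\tau}\,|\partial B_{R}|\,|B_{R}|\,\sup_{u\in \partial^- B_{R}}\E\big(|G^{B_{R}}(x_0,u;z)|^s\big)
\end{align*}
is smaller than $1$ uniformly in $x_0$, then \eqref{eqn:FM-bound} holds, with decay rate $\mu\sim -\log\kappa(R)/R$. Here $\partial B_R$ is the edge boundary, of size at most $M|B_R|\lesssim R^\alpha$ by \eqref{eqn:bded-geo}.

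\emph{Step 1: the finite-volume criterion.} We prove this via the geometric resolvent identity $G(x,y)=G^{B}(x,y)+\sum_{(u,u')\in\partial B}G^B(x,u)G(u',y)$. This is combined with the standard a priori bound $\E(|G^\Lambda(x,y;z)|^s)\le C_{s,\tau}$ and with resampling of the potential on the boundary layer. The resampling (a decoupling lemma, using $\tau$-H\"older continuity and $s<\tau/2$) is what allows us to bound $\E|G^B(x,u)G(u',y)|^s$ by $C\,\E|G^B(x,u)|^s\,\E|G(u'',y)|^s$ for neighbours $u''$ of $u'$. Iterating the identity along a path of balls moving from $x$ toward $y$, each step reduces $\E|G(\cdot,y)|^s$ by the factor $\kappa(R)<1$, and we obtain $\lfloor d(x,y)/(R+1)\rfloor$ steps. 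Only polynomial volume growth enters, through the cardinality of $\partial B_R$, so the argument transfers verbatim from $\Z^d$ to the graph setting.

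\emph{Step 2: smallness from Lifshitz tails.} Fix $E\le E_0:=\frac12 R^{-\delta}$ and $z=E+i\eps$. We split $\Omega$ into two events. On the good event $\{\inf\sigma(H^{B_R})>R^{-\delta}\}$ we have $H^{B_R}-E\ge \frac12R^{-\delta}$ with $H^{B_R}\ge 0$. A Combes--Thomas estimate then gives $|G^{B_R}(x_0,u;z)|\le 4R^{\delta}e^{-cR^{-\delta/2}d(x_0,u)}$, and for boundary points $u$, where $d(x_0,u)\approx R$, this is at most $4R^\delta e^{-cR^{1-\delta/2}}$. Because $\delta<1$, this is superpolynomially small. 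On the bad event, whose probability is at most $R^{-\alpha_0}$ by Assumption~\ref{assume:lif}, we use H\"older's inequality. Provided $2s<\tau$ (which is why $s<\tau/2$ is required), the a priori bound gives $\E(|G|^{2s})\le C$, hence
\begin{align*}
\E\big(|G^{B_R}|^s\one_{\rm bad}\big)\le \big(\E|G^{B_R}|^{2s}\big)^{1/2}\,\P({\rm bad})^{1/2}\le C R^{-\alpha_0/2}.
\end{align*}
Multiplying by the prefactor $|\partial B_R||B_R|\lesssim R^{2\alpha}$ gives a total of $CR^{2\alpha-\alpha_0/2}$ plus a superpolynomially small term.

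\emph{Main obstacle.} The bookkeeping in Step 2 requires $\alpha_0>4\alpha$, not the stated $\alpha_0>3\alpha$. To reach the hypothesis as stated, we must sharpen the criterion so that only one volume factor appears: use a criterion of the form $\kappa(R)=C\sum_{(u,u')\in\partial B_R}\E|G^{B_R}(x_0,u)|^s$. The extra $|B_R|$ factor comes from decoupling, and we remove it with a more careful resampling, or by placing $x_0$ at the centre. We also replace the Cauchy--Schwarz split by H\"older with exponents $(p,q)$ and $ps<\tau$, which yields a factor $R^{-\alpha_0(1-1/p)}$. The task is then to tune $s$ small so that $\alpha_0(1-1/p)$ exceeds the remaining volume exponent, which is achievable if that exponent is at most $3\alpha$ after the sharpened criterion. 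This step is where the main care is needed. The uniform a priori bound near $E\le E_0$ also requires $\tau$-regularity, which is available.

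\emph{Part (2).} Once \eqref{eqn:FM-bound} holds uniformly for $E\in I=[0,E_0]$ and all $\eps\ne0$, the standard FMM implication (e.g.\ \cite[Thm.~7.7 and Ch.~7]{aizenman2015random}) applies when $\tau=1$ and $P_0$ has compact support. It yields the eigenfunction correlator bound $\E\sup_t|\langle\delta_x,P_I e^{-itH}\delta_y\rangle|\le Ce^{-\mu d(x,y)}$ with a possibly smaller $\mu$. Spectral localization follows via the Simon--Wolff criterion. Summing over $y$ with $d(x,y)\ge R$, using $|\{y:d(x,y)=n\}|\lesssim n^{\alpha}$ and the bound $|\cdot|^2\le|\cdot|$, gives \eqref{eqn:DAL} after shrinking $\mu$.
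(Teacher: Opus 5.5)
Your overall architecture matches the paper's. It uses a good/bad split on $B_R$, with Combes--Thomas on the good event and H\"older plus the a priori bound on the bad event. This is followed by a finite-volume contraction iterated along a chain of balls, and the standard FMM citations for part (2). Your square-root-rate Combes--Thomas bound is valid below the spectrum, although the paper's linear rate already suffices because $\delta<1$.

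Your final H\"older repair, with $ps<\tau$ and $s$ small, is exactly the paper's device in Lemma~\ref{lem:initial}: $p=\frac12(1+\tau/s)$, with $s$ chosen so that $\alpha_1=\alpha_0/q=\frac12(\alpha_0+3\alpha)$. Once H\"older is in place, the ``$\alpha_0>4\alpha$'' obstacle disappears. So there is no need to sharpen the criterion to one volume factor, and your suggested mechanisms for doing so (``more careful resampling'', centring $x_0$) are unsubstantiated anyway. The real gap is Step 1: it is precisely the step that produces the threshold $3\alpha$, and you assert it rather than derive it.

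The decoupling you invoke, $\E|G^{B}(x,u)G(u',y)|^s\le C\,\E|G^{B}(x,u)|^s\,\E|G(u'',y)|^s$, does not follow from resampling at $u'$ or its neighbours. The reason is that $G(u',y)$ depends on $V_\omega|_B$ through every edge of $\Gamma(B)$. For example, the one-site expansion $G(u',y)=G(u',u')\sum_{u''\sim u'}G^{\V\setminus\{u'\}}(u'',y)$ still leaves $G^{\V\setminus\{u'\}}(u'',y)$ coupled to $B$. Moreover, if that inequality did hold, it would give a one-factor criterion, so it does not even account for your $|B_R|$.

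The paper instead expands twice, around $B_R$ and $B_{R+1}$:
\begin{itemize}
\item This gives $G(u_0,y)=\sum G^{B_R}(u_0,u)\,G(u',v)\,G^{B_{R+1}^c}(v',y)$, with $u',v$ in the layer $B_{R+1}\setminus B_R$.
\item The middle factor is averaged out over $V_\omega(u'),V_\omega(v)$ by the conditional a priori bound, and the outer factors are independent (Lemma~\ref{lem:ini-induc}).
\item The depleted-Green-function bound, Lemma~\ref{lem:dep-green}, then converts $G^{B_{R+1}^c}(v',y)$ back to $G(\cdot,y)$. This is where the two-site decoupling inequality and $s<\tau/2$ are genuinely needed, not in the H\"older step.
\item The sums over $\Gamma(B_R)$, over the sphere of radius $R+2$, and over the sphere of radius $R+1$ each cost $\lesssim R^\alpha$. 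This gives the contraction factor $C_2'R^{3\alpha-\alpha_1}$, which is why the hypothesis is $\alpha_0>3\alpha$.
\end{itemize}

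A genuinely two-factor contraction like your $\kappa(R)$ can in fact be obtained here, but only by a different device. You would run the induction on the supremum over all subdomains $\Omega\subset\V$. The inequality $\inf\sigma(H^{B(x,R)\cap\Omega})\ge\inf\sigma(H^{B(x,R)})$ gives the initial estimate on $B(x,R)\cap\Omega$. Then the depleted function $G^{\Omega\setminus B(x,R+1)}$ never has to be converted back to $G$. That is not the argument you sketch.
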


\begin{remark}
 It is a well-established result that for a compactly supported random distribution, fractional moments of the Green's function provide an upper bound on the eigenfunction correlator of the random operator; see \cite[Theorem 7.7]{aizenman2015random}. Consequently, the exponential decay of the Green's function in \eqref{eqn:FM-bound} implies exponential decay of the eigenfunction correlator, which in turn leads to pure point spectrum and strong dynamical localization as expressed in \eqref{eqn:DAL}; see \cite[\S7.1]{aizenman2015random}. Our primary focus will be on establishing \eqref{eqn:FM-bound}. 
  One can verify that \eqref{eqn:DAL} implies that for any $q > 0$, the $q$-th moment of the wave packet for any initial data $\delta_x$ is uniformly bounded in time:
    \begin{align}\label{eqn:DAL-q}
   \sup_{t\in \R}\,  \sup_{x\in \V} \,  \sum_{y\in \V}\E\Big( d(x,y)^q\big|\ipc{\delta_x}{P_I(H_\omega)e^{- i t H_\omega}\delta_y}\big|^2\Big)\, <\infty  , 
    \end{align}
 which is often used as an alternative definition of dynamical localization.
\end{remark}

\begin{remark}
  Given a random operator $T + \lambda V_\omega$ on a graph $(\V,\mathcal{E})$, where $T$ is a bounded, self-adjoint (non-random) hopping term and $\lambda > 0$ is a coupling constant, it is a classical result that  for sufficiently large disorder $\lambda > \lambda_T$, the random operator exhibits strong exponential dynamical localization throughout its spectrum, provided the random distribution of $V_\omega$ is sufficiently regular; see \cite[Theorem 10.2]{aizenman2015random}. Such localization at large disorder does not require any regularity or dimensionality assumptions on the underlying graph and, in particular, applies to the random Schr\"odinger operator \eqref{eqn:AM} considered in this work. Our main interest, however, lies in the case where the disorder strength $\lambda$ is mild or small.
On the Euclidean lattice $\mathbb{Z}^d$ (where $\alpha = d$), for the random operator $T + \lambda V_\omega$ with any fixed disorder strength $\lambda$, it was proved in \cite[Theorem 4.3]{aizenman01finitevol} using the FMM method that a Lifshitz-tail type estimate implies localization near the spectral edge; see also \cite[Corollary 11.6]{aizenman2015random}. Theorem~\ref{thm:FMbound} extends this fundamental result from $\mathbb{Z}^d$ to any ``Ahlfors $\alpha$-regular'' graph\footnote{For the $\mathbb{Z}^d$ lattice, Assumption~\ref{assume:lif} can be weakened to $\alpha_0 > 3(d-1)$ due to the boundary geometry $B_R \setminus B_{R-1} \sim R^{d-1}$ for any ball or cube of radius or side length $\sim R$.} with non-integer $\alpha \ge 1$.
\end{remark}

The second main result of this paper provides a characterization of Lifshitz-tail estimates for random operators of the form $H_\omega=-\Delta+V_\omega$. A sufficient condition for obtaining such estimates involves the Neumann eigenvalue of the free Laplacian on a ball. Let $\Delta^{B_r,N}$ denote the Neumann Laplacian (i.e., the subgraph Laplacian) on a ball $B_r = B(x_0, r) \subset \V$ for some $x_0 \in \V$ and $r > 0$:
\begin{align}\label{eqn:N-lap} 
  \Delta^{B_r,N}f(x)=  \sum_{ y\in B_r :y\sim x}\big(f(y)- f(x)\big), \ \ x\in  B_r.
\end{align}
Denote by $E_1 = E_1(-\Delta^{B_r,N})$ the first (smallest) non-zero eigenvalue of $-\Delta^{B_r,N}$.
\begin{assumption}[Neumann Laplacian Eigenvalue Lower Bound]\label{assume:N-ev}
We say that $E_1$ satisfies a $\beta$-power lower bound on some ball $B(x_0, r)\subset \V$ if there exist constants $\beta, c_0 > 0$ such that 
\begin{align}\label{eqn:N-ev-lower}
    E_1(- \Delta^{B_r,N}) \ge \frac{c_0}{r^\beta}. 
\end{align}
\end{assumption}

\begin{theorem}\label{thm:lif}
Let $H_\omega$ be the Anderson model given in \eqref{eqn:AM} on a graph $(\V,\mathcal{E})$ satisfying \eqref{eqn:vol-control} for some $\alpha > 0$. Assume that for given $\delta \in (0,1)$ and $\beta \in [2,\alpha+1]$, there exists $c_0 > 0$ such that for sufficiently large $r$ and any $x_0 \in \V$, condition \eqref{eqn:N-ev-lower} holds on $B(x_0,r)$ with constants $c_0, \beta$. Then there exist constants $\eta > 0$, $C > 0$, and $C_0$ (independent of $r$) such that, setting $R = C_0 r^{\frac{\beta}{\delta}}$, for any $x \in \V$ and $B_R = B(x,R)$,
\begin{align}\label{eqn:exp-lif-upper}
        \P\big(\inf \sigma(H^{B_R})\le R^{-\delta}\big)\le C   \exp(-\eta R^{\delta\, \frac{\alpha}{\beta} }).
\end{align}

  As a consequence, the fast power-decay condition \eqref{eqn:power-decay} holds for any $\alpha_0 > 0$, provided $R = C_0 r^{\frac{\beta}{\delta}}$ is sufficiently large.
\end{theorem}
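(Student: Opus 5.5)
The plan is the standard Kirsch--Martinelli / Stollmann route: Dirichlet--Neumann bracketing combined with a Temple-inequality argument on small balls of radius $r$.

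\emph{Step 1 (bracketing).} Cover $B_R$ by a family of balls. Take a maximal $r$-separated set $\{x_j\}\subset B_R$ and partition $B_R$ into the associated Voronoi cells $\Lambda_j\subset B(x_j,r)$, each with $c r^\alpha\le|\Lambda_j|\le C r^\alpha$ by \eqref{eqn:vol-control}. Removing the edges between different cells gives the operator inequality $-\Delta^{B_R}\ge\bigoplus_j(-\Delta^{\Lambda_j,N})$. This holds because the Dirichlet restriction contains the diagonal boundary terms, and the quadratic form of the Laplacian is a sum of $|f(x)-f(y)|^2$ over edges, so dropping edges only decreases it. Hence
\[
\inf\sigma(H^{B_R})\ \ge\ \min_j \inf\sigma\big(-\Delta^{\Lambda_j,N}+V_\omega\big).
\]
One subtlety is that Assumption~\ref{assume:N-ev} is stated for balls, not cells. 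I would sidestep this by using overlapping balls $B(x_j,2r)$ with bounded overlap multiplicity instead of a partition. The form then satisfies $\sum_{\text{edges in }B_R}\ge K^{-1}\sum_j\sum_{\text{edges in }B(x_j,2r)}$, where $K$ bounds the overlap by volume doubling, and the potential is handled the same way. This gives $\inf\sigma(H^{B_R})\ge K^{-1}\min_j\inf\sigma(H^{B(x_j,2r),N})$, with the constant absorbed later. The number of balls is $\lesssim (R/r)^\alpha$.

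\emph{Step 2 (Temple's inequality on one ball).} Fix a ball $B=B(x_j,2r)$ and write $H^N=-\Delta^{B,N}+\tilde V$, where $\tilde V=\min(V_\omega,1)$ is a truncation that only lowers the operator. The unperturbed ground state is the normalized constant $\phi=|B|^{-1/2}\one_B$, and by Assumption~\ref{assume:N-ev} the spectral gap is $E_1\ge c_0 (2r)^{-\beta}$. Temple's inequality gives
\[
\inf\sigma(H^N)\ \ge\ \langle\phi,\tilde V\phi\rangle-\frac{\langle\phi,\tilde V^2\phi\rangle}{E_1-\langle\phi,\tilde V\phi\rangle},
\]
valid whenever $\langle\phi,\tilde V\phi\rangle<E_1$. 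To keep the mean below the gap, I would use instead $\tilde V=\min(V_\omega,\,t\,r^{-\beta})$ with a small constant $t$. Then $\langle \phi,\tilde V\phi\rangle=|B|^{-1}\sum_{x\in B}\tilde V(x)\le t r^{-\beta}\le E_1/3$, and $\langle\phi,\tilde V^2\phi\rangle\le t r^{-\beta}\langle\phi,\tilde V\phi\rangle$. This yields $\inf\sigma(H^N)\ge \tfrac12|B|^{-1}\sum_{x\in B}\tilde V(x)$ for $t$ small.

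\emph{Step 3 (large deviations).} The average $\frac1{|B|}\sum_x\min(V_\omega(x),tr^{-\beta})$ has expectation at least $t r^{-\beta}\P(V_\omega(0)\ge t r^{-\beta})\ge t r^{-\beta}(1-p_0)/2$ for $r$ large, since $p_0<1$. A Hoeffding bound on these bounded i.i.d.\ terms shows that
\[
\P\Big(\text{average}< c\,t r^{-\beta}\Big)\le e^{-\eta' |B|}\le e^{-\eta r^\alpha}.
\]
Choosing $C_0$ so that $R^{-\delta}=C_0^{-\delta}r^{-\beta}$ is below $cK^{-1}t r^{-\beta}/2$, a union bound over $\lesssim(R/r)^\alpha$ balls gives
\[
\P\big(\inf\sigma(H^{B_R})\le R^{-\delta}\big)\le C(R/r)^\alpha e^{-\eta r^\alpha}.
\]
Since $r\sim R^{\delta/\beta}$, the polynomial prefactor is absorbed by shrinking $\eta$, which gives \eqref{eqn:exp-lif-upper}. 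The consequence \eqref{eqn:power-decay} for any $\alpha_0$ follows immediately.

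\emph{Main obstacle.} I expect the delicate step to be Step 1: justifying the bracketing with Neumann operators on balls, which overlap, rather than on disjoint cells, and checking that the ball count and overlap multiplicity are controlled only by \eqref{eqn:vol-control}. The restriction $\beta\in[2,\alpha+1]$ presumably guarantees consistency: the gap exponent cannot beat the general bounds $E_1\gtrsim r^{-(\alpha+1)}$ and $E_1\lesssim r^{-2}$. It should not otherwise enter the argument.
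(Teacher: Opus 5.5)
Your proposal is correct and follows essentially the same route as the paper. Both arguments use:
\begin{itemize}
\item a bounded-overlap cover of $B_R$ by balls of radius $\sim r$, as in the covering lemma;
\item Neumann bracketing with the overlap constant absorbed;
\item Temple's inequality for $-\Delta^{B,N}+\min\{V_\omega, t r^{-\beta}\}$ with the constant trial function;
\item Hoeffding's bound, and a union bound over $\lesssim (R/r)^\alpha$ balls with $R\propto r^{\beta/\delta}$.
\end{itemize}
The differences are only cosmetic. You compare the Dirichlet form on $B_R$ directly with the sum over balls, instead of passing through $H^{U_R,N}$ on the union. You also use radius-$2r$ balls, which just requires the Neumann gap at radius $2r$. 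The theorem's hypothesis already allows that, and it is avoidable anyway, since the radius-$r$ balls around a maximal $r$-separated set already cover $B_R$.
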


This result can be interpreted as a finite-volume analogue of the Lifshitz-tail (upper) estimate, which suffices to establish localization. If Assumption~\ref{assume:N-ev} holds for infinitely many $r \nearrow \infty$, the argument used to prove \eqref{eqn:exp-lif-upper} can be adapted to derive Lifshitz upper bounds for the infinite-volume density of states (assuming it exists). Denote by $\mathcal N(E;X)$ the eigenvalue counting function below energy $E$ for an operator $X$:
\begin{align}\label{eqn:ev-count}
    \mathcal N(E;X)=\#\big\{\ {\rm eigenvalues}\ E'\ {\rm of}\ X \ {\rm such\ that\ }\ E'\le E\ \big\}.
\end{align}

\begin{theorem}\label{thm:infi-lif-upper}
Let $H_\omega$ be the Anderson model given in \eqref{eqn:AM} on a graph $(\V,\mathcal{E})$ satisfying \eqref{eqn:vol-control} for some $\alpha > 0$. Suppose that for $\beta \in [2,\alpha+1]$, there exist constants $c_0, c_3 > 0$ such that there is an increasing sequence $\{r_i\}_{i \in \mathbb{N}}$ with $r_i \nearrow \infty$ and $r_{i+1} \le c_3 r_i$ for all $i \in \mathbb{N}$, and for any $x_0 \in \V$, 
Assumption~\ref{assume:N-ev} holds on $B(x_0, r_i)$ with constants $c_0, \beta$.
Then there exist constants $C_1, C_2, E_0>0$ such that for any $x \in \V$ and any $E \le E_0$,
\begin{align}\label{eqn:infi-lif-up}
    \varlimsup_{R\to \infty}  \frac{1}{|B(x,R)|}\E \mathcal N(E;H^{B(x,R)})  \le C_1 \exp(-C_2E^{-\frac{\alpha}{\beta}}).
\end{align}
\end{theorem}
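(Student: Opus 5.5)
We follow the standard Dirichlet--Neumann bracketing argument, with balls in place of cubes. Given $E\le E_0$ small, we choose a scale $r$ from the sequence $\{r_i\}$ adapted to $E$: let $r=r_i$ be the smallest index with $c_0 r_i^{-\beta}\ge K E$ for a large constant $K$ to be fixed. Since $r_{i+1}\le c_3 r_i$, this gives $r\asymp E^{-1/\beta}$, and hence $|B(x_0,r)|\asymp r^\alpha\asymp E^{-\alpha/\beta}$. This is exactly the exponent we want in \eqref{eqn:infi-lif-up}.

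\textbf{Step 1 (covering and Neumann bracketing).} For large $R$ we decompose $B(x,R)$ into disjoint pieces $\Lambda_j$, each of which is the intersection of $B(x,R)$ with a Voronoi-type cell around a maximal $r$-separated set $\{x_j\}$. Each $\Lambda_j$ satisfies $B(x_j,r/2)\cap B(x,R)\subset\Lambda_j\subset B(x_j,r)$. By \eqref{eqn:vol-control}, the number of cells is at most $C|B(x,R)|/r^\alpha$. Removing the edges between distinct cells lowers the quadratic form of $-\Delta$, since each removed edge term $|f(x)-f(y)|^2$ is nonnegative. Therefore $H^{B(x,R)}\ge\bigoplus_j H^{\Lambda_j,N}$, where $H^{\Lambda_j,N}=-\Delta^{\Lambda_j,N}+V_\omega$. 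By min-max,
\[
\mathcal N(E;H^{B(x,R)})\le\sum_j\mathcal N(E;H^{\Lambda_j,N})\le\sum_j|\Lambda_j|\,\one\{\inf\sigma(H^{\Lambda_j,N})\le E\},
\]
and $|\Lambda_j|\le c_2 r^\alpha$.

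\textbf{Step 2 (single-cell estimate).} The main obstacle is that Assumption~\ref{assume:N-ev} is stated for whole balls $B(x_j,r)$, not for the cells $\Lambda_j$. We will try to avoid needing a spectral gap for $\Lambda_j$ itself by instead using cells that are exactly balls, with bounded overlap, together with an IMS-type localization of the quadratic form. Concretely, take the cover $\{B(x_j,r)\}$, whose overlap multiplicity is bounded by volume doubling. Then, for $f$ supported in $B(x,R)$,
\[
\sum_{e}|\nabla f|^2\ge \frac1{M'}\sum_j\sum_{e\subset B_j}|\nabla f|^2, \qquad \sum_x V|f|^2\ge \frac1{M'}\sum_j\sum_{x\in B_j}V|f|^2 .
\]
Note that in this version the boundary cells are full balls $B(x_j,r)$ that stick out of $B(x,R)$. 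We handle this by taking $f$ extended by zero, so that $f$ vanishes on $B_j\setminus B(x,R)$. This gives $H^{B(x,R)}\ge \frac1{M'}\bigoplus_j H^{B_j,N}$ as a form inequality in the appropriate embedded sense, and the counting bound then goes through with $E$ replaced by $M'E$.

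For a single ball, the event $\inf\sigma(H^{B_j,N})\le M'E$ is handled by Temple's inequality, or by first-order perturbation around the constant ground state $\phi_0=|B_j|^{-1/2}\one$. Using the gap $E_1\ge c_0r^{-\beta}\ge KE$, and truncating $V$ at $\min(V,c_0r^{-\beta}/2)$, this event forces
\[
\frac1{|B_j|}\sum_{y\in B_j}\min\bigl(V(y),\tfrac12 c_0r^{-\beta}\bigr)\le C E .
\]
Choosing $K$ large and $E_0$ small makes $CE$ much smaller than the mean of the truncated variables, which is bounded below because $p_0<1$ and the truncation level is small but comparable to $E$. A Hoeffding/large-deviation bound for i.i.d.\ bounded variables then gives the probability bound $\exp(-c|B_j|)\le \exp(-c' r^\alpha)$. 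Making this precise requires rescaling: the variables $\min(V,t)/t$ with $t=c_0r^{-\beta}/2$ have mean at least $P_0(V\ge t)\ge P_0(V\ge t_0)>0$ for $t\le t_0$.

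\textbf{Step 3 (conclusion).} Taking expectations,
\[
\frac1{|B(x,R)|}\E\mathcal N(E;H^{B(x,R)})\le \frac{C}{r^\alpha}\cdot r^\alpha e^{-c' r^\alpha}\le C_1\exp\bigl(-C_2E^{-\alpha/\beta}\bigr),
\]
uniformly in large $R$, which gives \eqref{eqn:infi-lif-up}. The delicate points are the overlapping-cover form inequality in Step 2 and the Temple argument. The Temple argument is the same one used for Theorem~\ref{thm:lif}, so we would reuse it.
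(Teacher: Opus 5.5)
This is correct and is essentially the paper's proof. Once you discard the Voronoi partition, you use its bounded-overlap cover of $B(x,R)$ by balls $B(x_j,r)$ from the covering lemma, which gives $\mathcal N(E;H^{B(x,R)})\le\sum_j\mathcal N(M'E;H^{B_j,N})$ (your overlap constant correctly multiplies $E$, whereas the paper writes $E/C_1$). This is followed by Temple's inequality for the truncated potential around the constant Neumann ground state, Hoeffding's bound, and the scale choice $r=r_i\asymp E^{-1/\beta}$ via $r_{i+1}\le c_3r_i$.

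There are two small slips. First, you need the \emph{largest} index with $c_0r_i^{-\beta}\ge KE$, not the smallest. Second, truncating at $\tfrac{1}{2}c_0r^{-\beta}$ works only if you keep the exact dependence on the empirical mean $m$ of the truncated potential in Temple's bound, which then gives $E_0\ge m/2$. The paper's cruder estimate needs the lower truncation level $\tfrac{1}{3}c_0r^{-\beta}$.
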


Conversely, a comparable power upper bound on the Dirichlet eigenvalue yields Lifshitz lower bounds for the infinite-volume density of states (assuming it exists).
Let $\Delta^{B_r}$ denote the Dirichlet Laplacian on the ball $B_r = B(x_0, r) \subset \V$ for some $x_0 \in \V$ and $r > 0$:
\begin{align}\label{eqn:D-lap}
\Delta^{B_r}f(x)=  \sum_{ y\in \V :y\sim x}\big(f(y)- f(x)\big), \ \ x\in  B_r.
\end{align}
Let $E_1 = E_1(-\Delta^{B_r})$ denote the first (smallest) eigenvalue of $-\Delta^{B_r}$.
\begin{assumption}[Dirichlet Laplacian Eigenvalue Upper Bound]\label{assume:D-ev}
We say that $E_1$ satisfies a $\beta$-power upper bound on a ball $B(x_0, r)$ if there exist constants $\beta, c_0' > 0$ such that
\begin{align}\label{eqn:D-ev-upper}
E_1(- \Delta^{B_r}) \le \frac{c_0'}{r^\beta}.
\end{align}
\end{assumption}

\begin{theorem}\label{thm:infi-lif-lower}
Let $H_\omega$ be the Anderson model given in \eqref{eqn:AM} on a graph $(\V,\mathcal{E})$ satisfying \eqref{eqn:vol-control} for some $\alpha > 0$.
Suppose that for $\beta \in [2,\alpha+1]$, there exist constants $c_0', c_3' > 0$ and an increasing sequence ${r_i}{i\in\N}$ with $r_i\nearrow \infty$ and $r{i+1}\le c_3'r_i$ for all $i \in \mathbb{N}$, and that for any $x_0 \in \V$,  Assumption~\ref{assume:D-ev} holds on $B(x_0, r_i)$ with constants $c_0', \beta$.
Assume in addition that the common distribution $P_0$ satisfies  \begin{align}\label{eqn:V-Lif-ass}
P_0([0,\eps])\ge C\eps^\kappa,
\end{align}
for some $C, \kappa > 0$ and all sufficiently small $\varepsilon > 0$.
Then there exist constants $C_1', C_2', E_0'>0$ 
such that for any $x \in \V$ and any $E \le E_0'$,
\begin{align}\label{eqn:infi-lif-lower}
\varliminf_{R\to \infty}  \frac{1}{|B(x,R)|}\E \mathcal N(E;H^{B(x,R)})  \ge C_1' \exp(-C_2' |\log E| E^{-\frac{\alpha}{\beta}})
\end{align}
\end{theorem}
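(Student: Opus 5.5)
The strategy is the standard one for Lifshitz lower bounds, transplanted to balls: build many disjoint small balls on which the potential is uniformly tiny, use the Dirichlet test function there, and count via min-max. Fix $E\le E_0'$ small. Choose $r=r_i$ from the given sequence as the smallest element with $2c_0' r_i^{-\beta}\le E$. Because $r_{i+1}\le c_3' r_i$, this gives $r\asymp E^{-1/\beta}$, with constants depending only on $c_0',c_3',\beta$. Set $\eps = E/2$.

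First, inside the large ball $B(x,R)$ with $R\gg r$, I select a maximal $(2r+2)$-separated set of centers $\{x_j\}$ such that each $B(x_j,r)\subset B(x,R)$; it suffices to take centers in $B(x,R-r-1)$. By \eqref{eqn:vol-control} and a standard packing/covering argument, the number of such centers is at least $c\,|B(x,R)|/r^{\alpha}$ once $R$ is large. The balls $B(x_j,r)$ are pairwise disjoint and not adjacent.

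Second, on each $B_j=B(x_j,r)$, let $\phi_j$ be the normalized ground state of $-\Delta^{B_j}$, extended by zero. By Assumption~\ref{assume:D-ev}, $\langle\phi_j,-\Delta\phi_j\rangle = E_1(-\Delta^{B_j})\le c_0' r^{-\beta}\le E/2$. Here the quadratic form of the Dirichlet Laplacian \eqref{eqn:D-lap} is exactly the full-graph form evaluated on functions supported in $B_j$, and the restriction $H^{B(x,R)}$ has that same form on such functions. On the event $\Omega_j=\{V_\omega(y)\le\eps \ \forall y\in B_j\}$ we get $\langle\phi_j,H^{B(x,R)}\phi_j\rangle\le E$. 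Since the $\phi_j$ have disjoint, non-adjacent supports, they are orthogonal and the form is diagonal on their span. Min-max then gives $\mathcal N(E;H^{B(x,R)})\ge\#\{j:\omega\in\Omega_j\}$.

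Third, I take expectations. By independence and \eqref{eqn:V-Lif-ass},
\[
\P(\Omega_j)\ge (C\eps^{\kappa})^{|B_j|}\ge \exp\big(-c_2 r^{\alpha}(\kappa|\log(E/2)|+|\log C|)\big)\ge \exp\big(-C_2'|\log E|E^{-\alpha/\beta}\big),
\]
using $r^\alpha\lesssim E^{-\alpha/\beta}$ and $E$ small. Hence
\[
\frac{1}{|B(x,R)|}\E\mathcal N(E;H^{B(x,R)})\ge c\,r^{-\alpha}\exp\big(-C_2'|\log E|E^{-\alpha/\beta}\big).
\]
The prefactor $r^{-\alpha}\asymp E^{\alpha/\beta}$ can be absorbed into the exponential by enlarging $C_2'$, since $E^{\alpha/\beta}\ge \exp(-|\log E|E^{-\alpha/\beta})$ for small $E$. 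Taking $\liminf_{R\to\infty}$ gives \eqref{eqn:infi-lif-lower}.

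The main point needing care is the packing count: we need $\gtrsim |B(x,R)|/r^\alpha$ disjoint radius-$r$ balls inside $B(x,R)$ uniformly in $x$. This follows from maximality, because the balls $B(x_j,2r+2)$ then cover $B(x,R-r-1)$. By \eqref{eqn:vol-control}, $|B(x,R-r-1)|\ge c_1(R/2)^\alpha$ and $|B(x_j,2r+2)|\le c_2(4r)^\alpha$, which gives the required count. Everything else is routine, and the restriction $\beta\in[2,\alpha+1]$ plays no essential role beyond the given hypotheses.
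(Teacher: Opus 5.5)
Your proof is correct. It differs from the paper in how the small balls are decoupled.

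\textbf{The paper's route.} The paper uses Dirichlet bracketing.
\begin{itemize}
\item It takes disjoint balls $B(x_i,2r)$ from the covering lemma (applied with radius $4r$) and partitions $B(x,R)$ into these balls plus a remainder $Q$.
\item It introduces the modified Dirichlet Laplacians $-\Delta^{A,D}$, which carry the boundary penalty $\deg_{\V\setminus A}$. This gives the form bound $H^{B(x,R)}\le H^{Q,D}\oplus\bigoplus_i H^{B_{2r}(x_i),D}$.
\item It then cites a counting lemma from the authors' earlier work to get $\mathcal N(E;H^{B(x,R)})\ge\sum_i\mathcal N(E;H^{B_{2r}(x_i),D})$.
\item On each piece it tests with the Dirichlet ground state of the inner ball $B(x_i,r)$, so the penalty never enters.
\end{itemize}

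\textbf{Your route.} You skip the bracketing entirely.
\begin{itemize}
\item You build the trial subspace directly in $\ell^2(B(x,R))$ from ground states on $(2r+2)$-separated balls and apply min--max once.
\item Because the supports are non-adjacent, every cross term of the nearest-neighbour operator vanishes. This does the job that the buffer annulus $B(x_i,2r)\setminus B(x_i,r)$ and the boundary penalty do in the paper.
\end{itemize}

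\textbf{Comparison.} Your argument is self-contained: it needs no modified operators, no external counting lemma, and it proves the packing count directly from a maximal separated set instead of quoting the covering lemma. It also loses nothing. The paper uses only one eigenvalue per small ball, through $\E\,\mathcal N(E;H^{B_{2r}(x_i),D})\ge\P\big(E_0(H^{B_{2r}(x_i),D})\le E\big)$, which is exactly what your test functions give. The paper's bracketing inequality compares full counting functions piece by piece, which is more than this lower bound requires.

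The remaining steps coincide in both proofs: choosing $r$ from the sequence with $r\asymp E^{-1/\beta}$, the estimate $\P(\text{all } V_\omega\le E/2 \text{ on the ball})\ge (C(E/2)^\kappa)^{c_2 r^\alpha}$, and absorbing the prefactor $r^{-\alpha}$ into the exponential.
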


\begin{remark}
Combining \eqref{eqn:infi-lif-up} and \eqref{eqn:infi-lif-lower} yields the following asymptotic characterization:
     \begin{align}\label{eqn:lif-log-log}
     \begin{aligned}
     \lim_{E\searrow 0}    \varliminf_{R\to \infty} \frac{1}{\log E}\log\left|\log\left(\frac{\E \mathcal N(E;H^{B(x,R)}) }{|B(x,R)|}  \right)\right|    &=  -\frac{\alpha}{\beta},\\
          \lim_{E\searrow 0}    \varlimsup_{R\to \infty} \frac{1}{\log E}\log\left|\log\left(\frac{\E \mathcal N(E;H^{B(x,R)}) }{|B(x,R)|}  \right)\right|    &=  -\frac{\alpha}{\beta}.
        \end{aligned}
     \end{align}
Note that the lower tail bound in \eqref{eqn:V-Lif-ass} is necessary for the lower-bound exponent in \eqref{eqn:lif-log-log}. If the tail of the underlying random distribution is ``thinner'' than \(O(\varepsilon^\kappa)\), one expects a smaller Lifshitz exponent than \(-\alpha/\beta\); see \cite{pastur1992book}.
 \end{remark}
\begin{remark}
We are interested in whether the parameters $\beta$ in Assumptions~\ref{assume:N-ev} and \ref{assume:D-ev} always coincide or not.
For any \(d\in \N_+\), on the \(d\)-dimensional Euclidean lattice \(\Z^d\), Assumptions~\ref{assume:N-ev} and \ref{assume:D-ev} hold with the same \(\beta=2\) for every ball \(B(x,r)\subset\Z^d\) by direct computation. 
The $\beta$ in Assumption~\ref{assume:D-ev} is often referred to as the random-walk dimension and is linked to the second parameter of the \emph{Heat Kernel Bound} $\mathrm{HK}(\alpha,\beta)$, a property of the free Laplacian on the corresponding space. The first parameter \(\alpha\) is the ``Ahlfors-volume'' parameter in \eqref{eqn:vol-control}. These two parameters admit many equivalent characterizations via properties of random walks and heat kernels on graphs; see the thorough discussion in \cite{barlow2017random}. We are not aware of any result showing that the \emph{Heat Kernel Bound} $\mathrm{HK}(\alpha,\beta)$ implies the Neumann eigenvalue lower bound in \eqref{eqn:N-ev-lower}. Consequently, it is unclear whether the \emph{Heat Kernel Bound} $\mathrm{HK}(\alpha,\beta)$ suffices to prove Theorems~\ref{thm:FMbound}, \ref{thm:infi-lif-upper}, and \ref{thm:infi-lif-lower}.
\end{remark}

A notable example of graphs with explicit parameters \(\alpha,\beta\) in which all these assumptions hold is the Sierpinski gasket graph. The spectrum and Lifshitz-tail singularity of the integrated density of states for the Anderson model on this graph have been studied by the authors in \cite{shou2024spectrum}. The estimate \eqref{eqn:lif-log-log} extends the result for the Sierpinski gasket graph to more general Ahlfors \(\alpha\)-regular graphs. We briefly define the Sierpinski gasket graph below to state our result on Anderson localization. See \cite{shou2024spectrum} and references therein for additional background on the Sierpinski gasket graph and related discussions of Lifshitz-tail estimates for random operators.  
 
Let $T_0\subseteq \R^2$ denote the unit equilateral triangle with vertex set  $\V_0=\{(0,0),(1,0),(1/2,\sqrt 3/2)\}=\{ a_1,a_2,a_3\}$.  
Define $T_n$ recursively by
\begin{align}\label{eqn:T-rec}
  T_{n+1}=T_n\cup \big(T_n+2^na_2\big) \cup \big(T_n+2^na_3\big),
  \ n\ge 0. 
\end{align}
Each $T_n$ consists of $3^n$ translations of the unit triangle $T_0$. 
Let $\V_n$ be the set of all vertices of the triangles in $T_n$. The edge set $\mathcal E_n=\{(x,y):x,y\in \V_n\}$ is defined by declaring $(x,y)\in \mathcal E_n$ if and only if there exists a triangle $T$ of side length 1 in $T_n$ with $x,y\in T$. The Sierpinski gasket graph $\mathcal{SG} = (\V_{\rm SG}, \mathcal{E}_{\rm SG})$ is defined as
\begin{align}\label{eqn:SG-def}
    \V_{\rm SG}=\bigcup_{n\ge 0}(\V_n\, \cup \,  \V_n'), \ \ \mathcal E_{\rm SG}=\bigcup_{n\ge 0}(\mathcal E_n\, \cup \mathcal E_n'),  
\end{align}
where $\V_n'$ and $\mathcal E_n'$ are the reflections of $\V_n$ and $\mathcal E_n$ across the $y$-axis, respectively; see Figure~\ref{fig:sg}. 
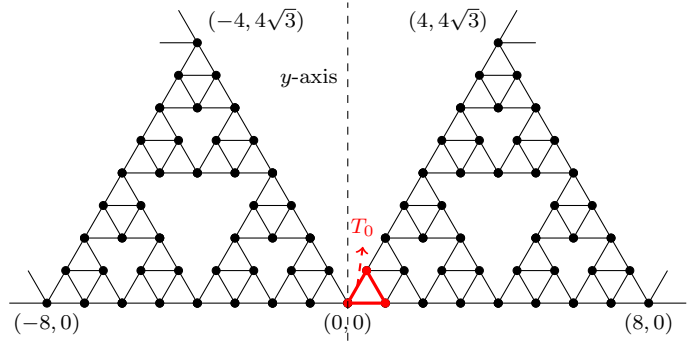
\begin{figure}[!ht]
\centering
\def\points{{0,0},{1,0},{.5,.866025},{1.5,.866025},{1,0},{2,0},{1,1.73205},{0,0}}

\begin{tikzpicture}[scale=.5] 
\foreach \shift in {(0,0),(2,0),(4,0),(6,0),(1,1.73205),(5,1.73205),(2,3.4641),(4,3.4641),(3,5.19615)}{
    \foreach \c[remember=\c as \clast (initially {0,0})] in \points {
        \draw[shift=\shift cm] (\clast) -- (\c);
        \filldraw[shift=\shift cm, black] (\c) circle (3pt);
    }
}

\begin{scope}[xscale=-1]
\foreach \shift in {(0,0),(2,0),(4,0),(6,0),(1,1.73205),(5,1.73205),(2,3.4641),(4,3.4641),(3,5.19615)}{
    \foreach \c[remember=\c as \clast (initially {0,0})] in \points {
        \draw[shift=\shift cm] (\clast) -- (\c);
        \filldraw[shift=\shift cm, black] (\c) circle (3pt);
    }
}
\end{scope}

\draw (4,6.9282) -- ++(60:1cm);
\draw (4,6.9282) -- ++(0:1cm);
\draw (8,0)--(9,0);
\draw (8,0) -- ++(60:1cm);

\begin{scope}[xscale=-1]
\draw (4,6.9282) -- ++(60:1cm);
\draw (4,6.9282) -- ++(0:1cm);
\draw (8,0)--(9,0);
\draw (8,0) -- ++(60:1cm);
\end{scope}
\node[below, font=\scriptsize] at (0,0) {$(0,0)$};
\node[below,font=\scriptsize] at (8,0) {$(8,0)$};
\node[below,font=\scriptsize] at (-8,0) {$(-8,0)$};
\node[above left,font=\scriptsize] at (4,6.9282) {$(4,4\sqrt{3})$};
\node[above right,font=\scriptsize] at (-4,6.9282) {$(-4,4\sqrt{3})$};

\draw[red, line width=1.2pt]
  (0,0) -- (1,0) -- (.5,.866025) -- cycle;
\foreach \pt in {(0,0),(1,0),(.5,.866025)}{
  \filldraw[red] \pt circle (3pt);
}
\draw[red, dashed, ->, line width=0.8pt]
  (0.25,0.4330125) -- (0.4,1.5)
  node[above, red, font=\scriptsize] {$T_0$};

\draw[dashed] (0,-1) -- (0,8); 
\node[left,font=\scriptsize] at (0,6) {$y$-axis};
\end{tikzpicture}
\caption{The unit triangle $T_0$ is located next to the origin, with vertices $\V_0=\{(0,0),(1,0),(1/2,\sqrt3/2)\}$.
The right hand side of the $y$-axis is the 3rd step of construction $T_3$ and the left hand side is its reflective mirror with respect to the $y$-axis. The picture contains $2\times 27$ many unit triangles $T$, which are all translations of $T_0$.  The dots form the vertex set $\V_3'\cup\V_3$. The edge set $\mathcal E_3'\cup\mathcal E_3$ consists of all edges of length 1 of the unit triangles.}
\label{fig:sg}
\end{figure}

The Sierpinski gasket graph $\mathcal{SG}$ is a classical self-similar graph satisfying \eqref{eqn:vol-control} with $\alpha=\log3/\log2\approx 1.58\ldots$, namely a fractal graph of that dimension; see \cite{BarlowPerkins} and \cite[\S2.9]{barlow2017random}. In our previous work \cite{shou2024spectrum}, we proved that Assumption~\ref{assume:N-ev} holds with $\beta=\log5/\log3$ (which is the same value of $\beta$ as in Assumption~\ref{assume:D-ev}), yielding a Lifshitz-tail estimate similar to \eqref{eqn:exp-lif-upper}. As a consequence of Theorems~\ref{thm:FMbound}, \ref{thm:lif}, and \cite[Theorems 1.1, 1.2]{shou2024spectrum}, we obtain:

\begin{corollary}
For the Anderson model $H_\omega$ on the Sierpinski gasket graph $\mathcal{SG}$, the following hold:
\begin{enumerate}[(i)]
\item For any \(\delta\in (0,1)\), there exist constants \(\eta>0\), \(C>0\), and a sequence of balls with radii \(R\), \(R\to\infty\) such that the Lifshitz-tail estimate \eqref{eqn:exp-lif-upper} holds with $\alpha = {\log 3}/{\log 2}$ and $\beta = {\log 5}/{\log 3}$. 
\item Suppose Assumption~\ref{assume:holder} holds with $\tau = 1$ and $P_0$ has compact support. Then there exists $E_0 > 0$ such that, if $\sigma(H_\omega) \cap [0, E_0] \neq \emptyset$, the operator $H_\omega$ has pure point spectrum and exhibits dynamical localization as in \eqref{eqn:DAL} on $[0, E_0]$. 

If, in addition, ${\rm supp}\, P_0$ contains an interval $[0, E_1]$ at the bottom of the spectrum, then $[0, E_1] \subset \sigma(H_\omega)$; accordingly, the above localization conclusion applies on $[0,\min(E_0, E_1)]$.
 \end{enumerate}

\end{corollary}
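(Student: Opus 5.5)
The corollary should follow by combining Theorem~\ref{thm:lif} and Theorem~\ref{thm:FMbound} with the input from \cite{shou2024spectrum}. The one place needing care is the passage from a sequence of radii to the ``all sufficiently large $R$'' form required by Theorem~\ref{thm:FMbound}.

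\emph{Part (i).} First I record that $\mathcal{SG}$ satisfies \eqref{eqn:vol-control} with $\alpha=\log3/\log2$ \cite{BarlowPerkins}. I then invoke \cite[Theorems 1.1, 1.2]{shou2024spectrum}, which give the Neumann lower bound \eqref{eqn:N-ev-lower} with $\beta=\log5/\log3$ on balls $B(x_0,r)$. A check is needed here: $\beta\approx1.46$ is not in $[2,\alpha+1]$, so Theorem~\ref{thm:lif} cannot be quoted verbatim. I would inspect its proof to confirm that the restriction $\beta\ge2$ is used only to compare $r^{-\beta}$ with $R^{-\delta}$ via $R=C_0r^{\beta/\delta}$, so that the argument holds for any $\beta>0$. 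If \cite{shou2024spectrum} supplies the bound only for the self-similar radii $r=2^n$ (or for suitably centered balls), then Theorem~\ref{thm:lif} applies exactly along $R_n=C_0 2^{n\beta/\delta}\to\infty$. This yields \eqref{eqn:exp-lif-upper} along a sequence of radii, which is precisely the formulation in (i).

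\emph{Part (ii).} Theorem~\ref{thm:FMbound} needs \eqref{eqn:power-decay} with $\alpha_0>3\alpha$ on $B(x_0,R)$ for some single sufficiently large $R$ and every $x_0$. Its FMM proof uses only one scale $R$, chosen large depending on $\delta,\alpha_0,\tau$. By (i), for any $\alpha_0$ we have $C\exp(-\eta R_n^{\delta\alpha/\beta})\le R_n^{-\alpha_0}$ once $n$ is large, so some radius in the sequence is large enough. The main obstacle is uniformity in the center $x_0$. I would use the bounded-geometry fact that every ball $B(x_0,R)$ in $\mathcal{SG}$ contains a self-similar copy (a triangle $T_n$-type cell) of comparable size $R'\asymp R$. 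By Dirichlet monotonicity, $\inf\sigma(H^{B(x_0,R)})\ge\inf\sigma(H^{B'})$ fails to help directly, since monotonicity goes the wrong way. So I would work instead with the cells, or with the Neumann bound on the ball, which \cite{shou2024spectrum} proves for general centers with comparability constants. Then Theorem~\ref{thm:lif} gives the estimate for every $x_0$ at radius $R_n$, with constants independent of $x_0$.

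With $\tau=1$ and compact support, Theorem~\ref{thm:FMbound}(2) gives pure point spectrum and \eqref{eqn:DAL} on $[0,E_0]$. For the final claim, assume $[0,E_1]\subset\operatorname{supp}P_0$. Take $E\in[0,E_1]$ and the constant potential $V\equiv E$; for large balls, choose a random configuration with $V_\omega$ within $\epsilon$ of $E$ on a ball of radius $L$. Since $0\in\sigma(-\Delta)$, which is seen via approximate ground states given by the Dirichlet upper bound \eqref{eqn:D-ev-upper} with $\beta=\log5/\log3$, this produces Weyl sequences at energy $E$ with positive probability. Ergodicity, or independence over infinitely many disjoint balls, makes this almost sure, giving $E\in\sigma(H_\omega)$. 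Hence the localization conclusion holds on $[0,\min(E_0,E_1)]$.
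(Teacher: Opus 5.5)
Your route is the paper's. Part (i) is Theorem~\ref{thm:lif} fed with the Neumann gap of \cite{shou2024spectrum} along dyadic radii. Part (ii) is Theorem~\ref{thm:FMbound}, whose proof uses one large scale $R_0$, so a sequence $R_n\to\infty$ suffices, as you say. The final claim is the inclusion $\sigma(-\Delta)+\supp P_0\subseteq\sigma(H_\omega)$ of \cite[Theorem 1.1]{shou2024spectrum} together with $0\in\sigma(-\Delta)$; you reprove this directly by Weyl sequences instead of citing it. The one point you flagged as delicate, however, you resolved the wrong way.

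The value $\beta=\log5/\log3$ in the statement is a slip for $\log5/\log2$. On a cell of side $2^n$ the Neumann gap is of order $5^{-n}=(2^n)^{-\log5/\log2}$, and the introduction itself lists $\beta_d=\log(d+3)/\log2$ for Sierpinski simplices, i.e.\ $\log5/\log2$ when $d=2$. That value lies in $[2,\alpha+1]\approx[2,2.58]$, so Theorem~\ref{thm:lif} applies verbatim and nothing needs extending.

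Your workaround cannot be the right reading. No $\beta<2$ can satisfy Assumption~\ref{assume:N-ev} along $r\to\infty$ on a graph with \eqref{eqn:vol-control}. Test the Neumann form with $g=d(x_0,\cdot)$ minus its mean: the numerator is at most $M|B_r|$, while \eqref{eqn:vol-control} forces $\|g\|^2\gtrsim r^2|B_r|$, so $E_1(-\Delta^{B_r,N})\lesssim r^{-2}$. So ``the proof of Theorem~\ref{thm:lif} works for every $\beta>0$'' is true but vacuous below $2$.

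Worse, the literal estimate, with exponent $\delta\alpha/\beta\approx1.08\,\delta$, is false in general. Suppose $P_0(\{0\})=p_0>0$, and take a ball $B(x_0,\ell)$ with $c_0'\ell^{-\log5/\log2}\le R^{-\delta}$ and $\ell\asymp R^{\delta\log2/\log5}$. The event $V_\omega\equiv0$ on this ball has probability at least $p_0^{c_2\ell^{\alpha}}\ge e^{-cR^{\delta\log3/\log5}}$. By \eqref{eqn:D-ev-upper} and Dirichlet domain monotonicity, this event forces $\inf\sigma(H^{B_R})\le R^{-\delta}$. So the correct tail exponent is $\delta\log3/\log5$.

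Two smaller points.

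First, for uniformity in $x_0$, only your first option (cells) is substantiated. The remark after the corollary indicates that \cite{shou2024spectrum} works with the triangles $\V_n$, not arbitrary balls. Rerun the bracketing of Section~\ref{sec:lif} with a cover of $B(x_0,R)$ by level-$m$ cells instead of balls. Each cell is, as an induced subgraph, a copy of $\V_m$, and cells meet only at corner vertices, giving overlap constant $2$. At most $\lesssim (R/2^m)^{\alpha}$ cells meet $B(x_0,R)$, and each has $\asymp 3^m$ vertices. Temple and Hoeffding then go through with constants independent of $x_0$.

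Second, in the Weyl-sequence step you need to turn the small Rayleigh quotient into a small residual. This follows from $\|\Delta\varphi\|^2\le\|\Delta\|\,\langle\varphi,-\Delta\varphi\rangle\le 2M\langle\varphi,-\Delta\varphi\rangle$. Then use a countable dense set of energies plus closedness of the spectrum to get all of $[0,E_1]$ on one full-measure event.
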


\begin{remark}
 On the Sierpinski gasket graph, the natural geometric unit is not a general graph ball but the finite triangle \(\V_n\) of ``side length'' \(2^n\), which is in fact a ``semiball'' of radius \(R=2^n\). It is more straightforward to verify Assumption~\ref{assume:lif},\ref{assume:N-ev} in terms of the sequence \(2^n\). See more details in \cite[Theorem 1.2]{shou2024spectrum}. 
 Also note that, to apply Theorem~\ref{thm:FMbound} and \ref{thm:lif} to prove localization results in Part (ii), we only need \eqref{eqn:exp-lif-upper} to hold for sufficiently large \(R\). 

In \cite[Theorem 1.1]{shou2024spectrum}, we have shown that for the Anderson model $-\Delta+V_\omega$  on the Sierpinski gasket graph \(\mathcal SG\), almost surely,  
    \begin{align}\label{eqn:AM-as-spectrum1}
 \sigma(-\Delta)+ \supp P_0      \subseteq  \sigma(-\Delta+V_\omega)\subseteq \sigma(-\Delta)+[\inf V_\omega,\sup V_\omega] . 
    \end{align}
Equality holds if the support of the random distribution has no gaps, e.g., a uniform distribution. In Part (ii), we do not require this stronger gapless-support assumption. This is because \(\inf \sigma(-\Delta)=0\), and we only use the left-hand inclusion 
\begin{align}
    [0,E_1]\subset 0+{\rm supp}\, P_0 \subset \sigma(-\Delta)+ \supp P_0      \subseteq  \sigma(-\Delta+V_\omega).
\end{align}
This ensures that the bottom of the spectrum contains an interval. 

For a singular distribution like the Bernoulli case, numerical evidence (see \cite[Figure 2]{shou2024spectrum}) suggests that the bottom also contains an interval, but we do not have a proof yet. In addition, the Bernoulli distribution violates the regularity assumption \eqref{eqn:tau-holder}, which precludes applying the FMM approach to obtain Anderson localization. The Anderson–Bernoulli localization is more challenging even in Euclidean spaces. 
Most existing results rely on multiscale analysis approach; see, e.g., the one-dimensional case in \cite{carmona,damanik02duke}, the two-dimensional case (\(\R^2\)) by Bourgain–Kenig \cite{bourken05}, and subsequent developments in the discrete setting in \cite{dingsmart20,lizhang22,li22}.
\end{remark}

There is extensive physics literature addressing the dependence of the metal-insulator transition (MIT) on dimensionality for the Anderson model, 
viewed as an extension of the conjecture for Euclidean spaces; see e.g. \cite{abrahams1979scaling,wegner1989four,lerner1994spectral,schreib1996}. 
A general expectation from the physics literature connects the presence (or absence) of weak localization to recurrence (or transience) of a classical random-walking particle on the underlying space \cite{AltshulerAranov1985,lee1985disordered,curtis2025absence,chen2024anderson,shou2025geometric}.
Graphs satisfying ${\rm HK}(\alpha,\beta)$ are transient if $\alpha>\beta$ and recurrent if $\alpha\le\beta$ (see e.g. \cite{barlow2017random}). Hence, if $\alpha\le\beta$, one expects the Anderson model to be fully localized for any disorder strength, and if $\alpha>\beta$, one expects full localization at large disorder and an MIT for small disorder. Mathematical progress has been incomplete, with only partial results available for the Euclidean lattice $\mathbb{Z}^d$ \cite{aizenman01finitevol,klopp2002weak,elgart2009lifshitz}. (See also results for the Bethe lattice \cite{aizenman2013resonant}.) In particular, there is little work in the mathematical literature on these questions for Ahlfors \(\alpha\)-regular graphs with non-integer dimension \(\alpha\).

The current work is our first attempt to investigate localization on these types of graphs. 
As an aside, we note that, while we consider the example of the Sierpinski gasket graph, constructed as a planar subset of \(\R^2\), there are also higher-dimensional analogues, called \(d\)-dimensional Sierpinski simplex graphs, such as the Sierpinski tetrahedron graph in $\mathbb{R}^3$ (Figure~\ref{fig:SP}). 
The corresponding  volume dimension is $\alpha_d=\frac{\log (d+1)}{\log 2}$, and the random-walk dimension is $\beta_d=\frac{\log(d+3)}{\log 2}$ for $d\ge 2$ (see \cite[\S10]{BarlowPerkins}); see also \cite{rammal84,alexander1982density,ben2000diffusion}. Recall that the Euclidean lattice satisfies \({\rm HK}(d,2)\), where the walk dimension remains constant at 2 as the volume dimension \(d\to\infty\). Full localization for the Euclidean lattice \(\Z^d\) is expected only for \(d\le 2\); this has been proved for \(\Z^1\) and remains open for \(\Z^2\). In contrast, for the \(d\)-dimensional Sierpinski simplex graph, we note that \(\alpha_d<\beta_d\) for all \(d\ge 2\), even though \(\alpha_d/\beta_d\nearrow 1^-\) as \(d\to \infty\).  Thus we are curious about the following conjecture.
\begin{conjecture}\label{conj:sg-loc}
      Let $H_\omega=-\Delta+  V_\omega$ the Anderson model  on the \(d\)-dimensional Sierpinski simplex graph \((\V,\mathcal E)\)  for \(d\ge 2\). For any random distribution of $V_\omega$, $H_\omega$ exhibits Anderson localization throughout its spectrum; that is, it has pure point spectrum with exponentially localized eigenfunctions forming a complete basis in $\ell^2(\mathbb{V})$.
\end{conjecture}

\newcommand{\DrawTetraEdges}[4]{%
  \draw[edge, ultra thin] (#1) -- (#2) -- (#3) -- cycle;
 \draw[edge, ultra thin] (#1) -- (#4);
\draw[edge, ultra thin] (#2) -- (#4);
\draw[edge, ultra thin] (#3) -- (#4);
  \fill (#1) circle[radius=2pt];
  \fill (#2) circle[radius=2pt];
  \fill (#3) circle[radius=2pt];
  \fill (#4) circle[radius=2pt];
}

\newcommand{\SierTetra}[5]{%
  \pgfmathtruncatemacro{\n}{#1}%
  \ifnum\n=0
    \DrawTetraEdges{#2}{#3}{#4}{#5}%
  \else
    \pgfmathtruncatemacro{\m}{\n-1}%
    \SierTetra{\m}
      {#2}
      {$ (#2)!0.5!(#3) $}
      {$ (#2)!0.5!(#4) $}
      {$ (#2)!0.5!(#5) $}
    \SierTetra{\m}
      {#3}
      {$ (#3)!0.5!(#2) $}
      {$ (#3)!0.5!(#4) $}
      {$ (#3)!0.5!(#5) $}
    \SierTetra{\m}
      {#4}
      {$ (#4)!0.5!(#2) $}
      {$ (#4)!0.5!(#3) $}
      {$ (#4)!0.5!(#5) $}
    \SierTetra{\m}
      {#5}
      {$ (#5)!0.5!(#2) $}
      {$ (#5)!0.5!(#3) $}
      {$ (#5)!0.5!(#4) $}
  \fi
}

\begin{figure}[H]
\centering
\begin{tikzpicture}[scale=0.7, line cap=round, line join=round]
  \tikzset{edge/.style={black, line width=0.6pt, opacity=0.95}}

  \begin{scope}[shift={(0,0)}]
    \coordinate (L1) at (0,0);
    \coordinate (L2) at (1.9,0.3);
    \coordinate (L3) at (1.6,-0.9);
    \coordinate (L4) at (1,1.5);
    \DrawTetraEdges{L1}{L2}{L3}{L4}
  \end{scope}

  \begin{scope}[shift={(7,0)}]
    \coordinate (A) at (-2,0);   
    \coordinate (B) at ( 2,0.5);   
    \coordinate (C) at ( 1.3,-2);
    \coordinate (D) at ( 0.2, 3.2);

    \SierTetra{1}{A}{B}{C}{D}
  \end{scope}

\end{tikzpicture}
 \caption{
    The first two generations of the Sierpinski tetrahedron  (Sierpinski pyramid) graph. The infinite graph  has volume dimension $\alpha_3=\log4/\log2$ and walk dimension $\beta_3=\log 6/\log2$.}
    \label{fig:SP}
\end{figure}
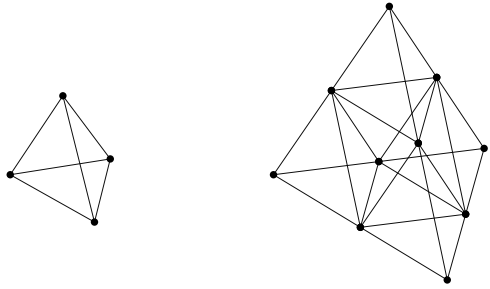

The remainder of this article is organized as follows. 
 In Section~\ref{sec:FM}, we prove Theorem~\ref{thm:FMbound}, namely the fractional-moment bound under Assumptions~\ref{assume:holder} and \ref{assume:lif}.
  In Section~\ref{sec:lif}, we establish the Lifshitz-tail estimates of Theorems~\ref{thm:lif}, \ref{thm:infi-lif-upper}, and \ref{thm:infi-lif-lower}, proving first the upper bounds \eqref{eqn:exp-lif-upper} and \eqref{eqn:infi-lif-up} via Neumann bracketing, and then the lower bound \eqref{eqn:infi-lif-lower} via (modified) Dirichlet bracketing.

Throughout the paper, constants such as $C$, $c$, and $c_i$ may change from line to line. 
We use the notation $X\lesssim Y$ to mean $X\le cY$, and $X\gtrsim Y$ to mean $X\ge cY$, for some
 constant $c$ depending only on the graph $(\V,\mathcal E)$. If $X\lesssim Y\lesssim X$, we may also write $X\approx Y$.


\section{Lifshitz Tails Imply Fractional Moment Bounds for the Green's Function}\label{sec:FM}

 In this section we show that the power-law bound \eqref{eqn:power-decay} implies the fractional-moment bound \eqref{eqn:FM-bound}. This extends the approach (Lifshitz tail + FMM $\Rightarrow$ localization) developed in \cite[Theorem 4.3]{aizenman01finitevol} for random operators acting on \(\ell^2(\Z^d)\). For a self-contained introduction to Anderson localization, see \cite{gunter2011intro}, especially Section~7 therein for an overview of this approach on \(\Z^d\). 

\subsection{Lifshitz Tail and Initial Estimate}
We first show that the Lifshitz-tail estimate \eqref{eqn:power-decay} establishes an initial-scale bound for the Green's function.

\begin{lemma}\label{lem:initial}
Let $H_\omega$ denote the Anderson model given in \eqref{eqn:AM}, which satisfies \eqref{eqn:vol-control} and \eqref{eqn:tau-holder} with $\alpha \ge 1$, $\tau \in (0,1)$, and $\kappa_\tau > 0$. Assume Assumption~\ref{assume:lif} holds for some $\delta \in (0,1)$ and $\alpha_0 > 3\alpha$.
Then there exist $s \in (0,\tau)$, $\alpha_1 \in (3\alpha,\alpha_0)$, and $C > 0$
such that, for $R$ in Assumption~\ref{assume:lif} taken sufficiently large, for all $x_0 \in \V$, $x,y \in B_R = B(x_0,R)$ with $d(x,y) \ge R/2$, all $E < \frac{1}{2}R^{-\delta}$, and $\eps > 0$, we have
    \begin{align}\label{eqn:initial-Gs}
     \E\Big(  |G^{B_R}(x,y;E+i\eps)|^s\Big) \le  CR^{-\alpha_1}.  
\end{align}
\end{lemma}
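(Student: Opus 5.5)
We split the expectation according to the event $\Omega_R=\{\inf\sigma(H^{B_R})\le R^{-\delta}\}$, which by Assumption~\ref{assume:lif} has $\P(\Omega_R)\le R^{-\alpha_0}$. Write
\begin{align*}
\E\big(|G^{B_R}(x,y;z)|^s\big)=\E\big(|G^{B_R}(x,y;z)|^s\one_{\Omega_R^c}\big)+\E\big(|G^{B_R}(x,y;z)|^s\one_{\Omega_R}\big),\qquad z=E+i\eps .
\end{align*}

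\emph{Bad event.} We apply H\"older's inequality with exponents $p=r/s$ and its conjugate, where $s<r<\tau$. This gives
\begin{align*}
\E\big(|G|^s\one_{\Omega_R}\big)\le \E\big(|G|^{r}\big)^{s/r}\,\P(\Omega_R)^{1-s/r}.
\end{align*}
The standard a priori fractional moment bound for $\tau$-H\"older distributions (via rank-two perturbation and the Krein formula, as in \cite[Ch.~8]{aizenman2015random}) gives $\E(|G^{B_R}(x,y;z)|^r)\le C_r$ for $r<\tau$. This bound is uniform in $z$, in $R$ and in $x,y$. Hence the bad-event term is at most $C R^{-\alpha_0(1-s/r)}$. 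Choosing $s$ small relative to $r$ makes $\alpha_0(1-s/r)>\alpha_1$ for any prescribed $\alpha_1\in(3\alpha,\alpha_0)$.

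\emph{Good event.} On $\Omega_R^c$ we have $H^{B_R}\ge R^{-\delta}$ and $E<\tfrac12 R^{-\delta}$. So $H^{B_R}-E\ge \tfrac12 R^{-\delta}>0$, a positive operator with spectral gap $\gamma:=\tfrac12R^{-\delta}$ above $E$, and $\|(H^{B_R}-z)^{-1}\|\le 2R^{\delta}$. We then use a Combes--Thomas estimate to get off-diagonal decay. Conjugate by $e^{\nu d(x,\cdot)}$ with $\nu\approx c\sqrt{\gamma}$ (or $\nu\approx c\gamma$, which suffices). The conjugated Laplacian differs from $-\Delta$ by an operator of norm $\le M(e^{\nu}-1)\lesssim \nu$ by \eqref{eqn:bded-geo}, and since $H^{B_R}-E\ge\gamma$ this perturbation stays invertible when $M(e^\nu-1)\le\gamma/2$. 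This yields
\begin{align*}
|G^{B_R}(x,y;z)|\le \frac{4}{\gamma}\,e^{-\nu d(x,y)}\le 8R^{\delta}\exp\!\big(-cR^{-\delta}\cdot R/2\big).
\end{align*}
Since $\delta<1$, the exponent $R^{1-\delta}$ tends to infinity. Thus on $\Omega_R^c$ the quantity $|G|^s\le C R^{s\delta}e^{-csR^{1-\delta}/2}$ is smaller than any power $R^{-\alpha_1}$ for $R$ large (depending on $\delta,s,\alpha_1$). Combining the two events gives \eqref{eqn:initial-Gs}.

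The main obstacle is the Combes--Thomas step. The gap $\gamma$ shrinks with $R$, so the decay rate $\nu\sim R^{-\delta}$ must be tracked carefully against the distance $R/2$. Since $\delta<1$ this works, but the constants must be uniform in $x_0$; bounded degree \eqref{eqn:bded-geo} supplies this. The a priori bound must also be uniform in $\eps$, in $E$ and in the finite volume, which the Krein-formula argument provides.
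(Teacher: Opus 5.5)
Your proposal is correct and follows essentially the same route as the paper. You split on the event $\{\inf\sigma(H^{B_R})\le R^{-\delta}\}$, bound the bad part by H\"older plus the a priori fractional-moment bound and Assumption~\ref{assume:lif}, and bound the good part by a Combes--Thomas estimate with decay rate of order $R^{-\delta}$ over distance at least $R/2$, which wins because $\delta<1$.

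The differences are cosmetic. You parametrize H\"older through an auxiliary exponent $r\in(s,\tau)$, so any $\alpha_1<\alpha_0$ is attainable, whereas the paper fixes $s$ explicitly and gets $\alpha_1=\tfrac12(\alpha_0+3\alpha)$. You also derive the Combes--Thomas bound inline from the gap $\gamma=\tfrac12R^{-\delta}$, which avoids the restriction $\mu\le 2M$ in the version the paper cites.
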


The key ingredients in the proof of Lemma~\ref{lem:initial} are a priori bounds on fractional moments (Lemma~\ref{lem:priori-G}) and Combes--Thomas estimates for energies away from the spectrum (Lemma~\ref{lem:CT}). These two lemmas are well-known results that hold in more general settings; we include the versions applicable to our model in Appendix~\ref{app:pB-CT}. We use these two lemmas to give
\begin{proof}[Proof of Lemma~\ref{lem:initial}]
Let $\Omega_B$ denote the event in the probability estimate \eqref{eqn:power-decay}, and let $\Omega_G$ be its complement, i.e.,
      \begin{align}
        \Omega_B=\big\{\,  \omega\in \Omega:\,  \inf \sigma(H^{B_R})\le R^{-\delta} \big\},\  \ \Omega_G=\Omega_B^c. 
    \end{align}
    Then for $z=E+i \eps$, 
    \begin{align}\label{eqn:good-bad-Gs}
      \E\big(  |G^{B_R}(x,y;z)|^s\big) = \E\Big(  |G^{B_R}(x,y;z)|^s\, \chi_{\Omega_B}\Big)+\E\Big(  |G^{B_R}(x,y;z)|^s\,\chi_{\Omega_G}\Big), 
    \end{align}
   where $\chi_S$ denotes the characteristic function of a set $S \subset \Omega$.
    
  On the ``bad set'' $\Omega_B$, the energy $E$ may resonate with the eigenvalues of $H^{B_R}$. We use the a priori bound Lemma~\ref{lem:priori-G} together with the probability estimate \eqref{eqn:power-decay} to control the fractional moments. More precisely, for any $s \in (0,\tau) \subset (0,1]$, let $p = \frac{1}{2}\big(1 + \frac{\tau}{s}\big) > 1$ so that $s' = sp < 1$, and let $q = \frac{p}{p-1}$ be conjugate to $p$ with $1/p + 1/q = 1$. Applying H\"older’s inequality with $(p,q)$ to the first term in \eqref{eqn:good-bad-Gs} gives for some constant \(C=C(s',p)>0\), 
  \begin{align}
     \E\Big(  |G^{B_R}(x,y;z)|^s\, \chi_{\Omega_B}\Big) \le & \Big| \E\big(  |G^{B_R}(x,y;z)|^{sp}\, \big)\Big|^{1/p}\, \Big| \E\big(    \chi_{\Omega_B}^q \big)\Big|^{1/q}  \notag \\
    = &   \Big| \E\big(  |G^{B_R}(x,y;z)|^{s'}\, \big)\Big|^{1/p}\,  \P\big(     \Omega_B  \big) ^{1/q} \notag \\
 \le & C\,   R^{-\frac{\alpha_0}{q}}, \label{eqn:bad-est0}   
  \end{align}
using the a priori bound \eqref{eqn:priori-xy} and \eqref{eqn:power-decay}.

Take $s = \tau \frac{\alpha_0 - 3\alpha}{3\alpha_0 + 3\alpha} \in (0,\tau)$; then $p = \frac{1}{2}\big(1 + \frac{\tau}{s}\big) = \frac{2\alpha_0}{\alpha_0 - 3\alpha} > 1$ and $\alpha_1 := \frac{\alpha_0}{q} = \frac{1}{2}(\alpha_0 + 3\alpha) \in (3\alpha,\alpha_0)$. Hence, 
\begin{align} 
     \E\Big(  |G^{B_R}(x,y;z)|^s\, \chi_{\Omega_B}\Big) \le  C\,   R^{- \alpha_1 } . \label{eqn:bad-est}
\end{align}

Suppose $E \le \frac{1}{2}R^{-\delta}$. Then, on the ``good set'' $\Omega_G$, ${\rm dist}(E,\sigma(H^{B_R})) \ge \frac{1}{2}R^{-\delta}$, i.e., $E$ is non-resonant with the eigenvalues of $H^{B_R}$. We can then bound the Green’s function using the Combes--Thomas estimate \eqref{eqn:CT-est} as follows: for $z=E+i\varepsilon$ and $c_M=\frac{\log 2}{2M}$ with $M$ the maximum vertex degree as in \eqref{eqn:bded-geo},
\begin{align*}
   |G^{B_R}(x,y;z)|\le &  \frac{2}{\dist(E,\sigma(H^{B_R}))}e^{-c_M\,\dist(E,\sigma(H^{B_R})) \, d(x,y)} \\
   \le & 4 R^{ \delta} \exp\left(-c_M\,\frac{1}{2}R^{-\delta}\, d(x,y)\right). 
\end{align*}
Hence, for $d(x,y) \ge R/2$, the second term in \eqref{eqn:good-bad-Gs} can be bounded as
\begin{align}
    \E\Big(  |G^{B_R}(x,y;z)|^s\, \chi_{\Omega_G}\Big) \le & 4^s R^{    s\delta } \exp \left(-s\,\frac{c_M}{2}R^{-\delta}\, \frac{1}{2}R\right) \notag \\
    =& 4^s R^{    s\delta } \exp \left(- \,\frac{sc_M}{4}R^{ 1-\delta}\,  \right). \label{eqn:good-est}
\end{align}
Combining \eqref{eqn:bad-est} and \eqref{eqn:good-est} gives
\begin{align}
     \E\Big(  |G^{B_R}(x,y;z)|^s\Big) \le 2C  \, R^{-\alpha_1},
\end{align}
for $R > R_0(s,\delta,\alpha_1,M)$; since $1-\delta>0$, this completes the proof of \eqref{eqn:initial-Gs}.
\end{proof}

\subsection{Geometric Decoupling}
In this section, we show that the Lifshitz tail estimate \eqref{eqn:initial-Gs} on a finite-volume ball leads to a fractal moment bound for the Green's function on the entire space.

\begin{lemma}[Initial-Induction]\label{lem:ini-induc}
Let $\alpha_1 \in (3\alpha,\alpha_0)$, $s \in (0,\tau)$ (sufficiently small), $R > 0$ (sufficiently large), and $E < \frac{1}{2}R^{-\delta}$ be as given in Lemma~\ref{lem:initial}, so that \eqref{eqn:initial-Gs} holds for some $C > 0$.
Fix $u_0 \in \V$, $R > 0$, and let $B_R = B(u_0,R)$. 
Then there exists a constant $C_2 > 0$
such that for any $\eps > 0$ and any $y \in \V$ with $d(u_0,y) \ge R+2$, one has
\begin{align}\label{eqn:induc-step0}
\E \big|G(u_0,y;z)\big|^s
\le C_2 R^{\alpha-\alpha_1} \cdot \sum_{\substack{v' \in \V \\ d(v',u_0) = R+2}} \E \big|G^{B_{R+1}^c}(v',y)\big|^s.
\end{align}
\end{lemma}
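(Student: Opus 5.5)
We prove the lemma by decoupling twice with the geometric resolvent identity. The boundary of $B_R$ must be removed from both the inside and the outside, and the re-sampling argument that produces the fractional-moment decoupling estimate then plays the random variables on the annulus $S_{R+1}=\{d(\cdot,u_0)=R+1\}$ off against each other.

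\emph{First decoupling.} Write $H_\omega = H^{B_R}\oplus H^{B_R^c} + \Gamma$, where $\Gamma$ consists of the hopping terms across $\partial B_R$, i.e. the edges $\{u,u'\}$ with $d(u,u_0)=R$ and $d(u',u_0)=R+1$. Since $y\notin B_R$, the geometric resolvent identity gives
\begin{align*}
G(u_0,y)=\sum_{\substack{u\in B_R,\,u'\notin B_R\\ u\sim u'}} G^{B_R}(u_0,u)\,G(u',y).
\end{align*}

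\emph{Second decoupling.} Expand $G(u',y)$ by the same identity, now splitting $\V$ into $B_{R+1}$ and $B_{R+1}^c$. Since $d(u_0,y)\ge R+2$, we have $y\in B_{R+1}^c$, so
\begin{align*}
G(u',y)=\sum_{\substack{w\in B_{R+1},\,v'\notin B_{R+1}\\ w\sim v'}} G(u',w)\,G^{B_{R+1}^c}(v',y),
\end{align*}
where $d(v',u_0)=R+2$. Combining the two expansions and using $|\sum a_i|^s\le\sum|a_i|^s$, we obtain
\begin{align*}
\E|G(u_0,y)|^s\le \sum_{u,u',w,v'} \E\Big(|G^{B_R}(u_0,u)|^s\,|G(u',w)|^s\,|G^{B_{R+1}^c}(v',y)|^s\Big).
\end{align*}

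\emph{Conditioning.} The first factor depends only on $\{V(x)\}_{x\in B_R}$, and the third only on $\{V(x)\}_{x\in B^c_{R+1}}$. The middle factor $G(u',w)$ depends on all the variables, but $u'$ and $w$ both lie in the shell $B_{R+1}\setminus B_{R-1}$ (here $d(w,u_0)\ge R+1$, since $w$ is adjacent to $v'$). We condition on all variables outside the finitely many sites $\{u',w\}$ and integrate out $V(u')$ and $V(w)$. Krein's formula shows that $G(u',w)$ is a linear-fractional function of these two variables with bounded numerator, so the a priori bound of Lemma~\ref{lem:priori-G} in its conditional, two-site form gives
\begin{align*}
\E\big(|G(u',w)|^{s}\,\big|\,\{V(x)\}_{x\ne u',w}\big)\le C_s
\end{align*}
uniformly. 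For this bound to apply with $s<\tau$ we use that $s$ is small, possibly with an extra H\"older step. The main obstacle is that $u'$ and $w$ may themselves lie in $B_R$ or in $B^c_{R+1}$: by construction $u'$ is at distance exactly $R+1$ and $w$ is at distance exactly $R+1$, so both lie in $S_{R+1}$, which is disjoint from both $B_R$ and $B^c_{R+1}$. Hence the outer factors are measurable with respect to the conditioning $\sigma$-algebra, and we get
\begin{align*}
\E(\cdots)\le C_s\,\E|G^{B_R}(u_0,u)|^s\;\E|G^{B^c_{R+1}}(v',y)|^s,
\end{align*}
where the product of expectations uses independence of the disjoint regions $B_R$ and $B^c_{R+1}$.

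\emph{Counting.} Since $d(u,u_0)=R\ge R/2$, Lemma~\ref{lem:initial} gives $\E|G^{B_R}(u_0,u)|^s\le CR^{-\alpha_1}$. For fixed $v'$, the pairs $(u',w)$ with $w\sim v'$ number at most $M\cdot|S_{R+1}|$, and the $u$ adjacent to $u'$ number at most $M$. By \eqref{eqn:vol-control}, $|S_{R+1}|\le |B(u_0,R+1)|\lesssim R^{\alpha}$. Summing therefore yields
\begin{align*}
\E|G(u_0,y)|^s\le C_2\,R^{\alpha-\alpha_1}\sum_{d(v',u_0)=R+2}\E|G^{B^c_{R+1}}(v',y)|^s,
\end{align*}
which is \eqref{eqn:induc-step0}.
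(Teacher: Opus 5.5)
Your proof is correct and follows essentially the same route as the paper. You use two geometric resolvent expansions across $\partial B_R$ and $\partial B_{R+1}$ to get the triple product $G^{B_R}(u_0,u)\,G(u',w)\,G^{B_{R+1}^c}(v',y)$, integrate out the potential at the two middle sites (which lie on the sphere of radius $R+1$) via the two-site a priori bound, use independence of $B_R$ and $B_{R+1}^c$, apply Lemma~\ref{lem:initial} to the first factor, and count boundary edges with \eqref{eqn:vol-control}. Your hedge about needing $s$ small or an extra H\"older step is unnecessary, since Lemma~\ref{lem:priori-G} already holds for every $s\in(0,\tau)$.
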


To prove Lemma~\ref{lem:ini-induc}, we use the following geometric decoupling. As before, let $H^{X}$ denote the restriction of $H_\omega$ to a subset $X \subset \V$.
Define $\partial^+ X = \{x \in \V : d(x,X) = 1\}$ as the outer (exterior) boundary of $X$,
$\partial X = \{x \in \V : d(x,X^c) = 1\}$ as the inner (interior) boundary,
and 
\begin{align*}
    \Gamma (X)  = \big\{\ (u,v) \text{ or }(v,u) \in \partial X \times \partial^+ X \ \text{with}\ d(u,v) = 1\ \big\}
\end{align*} as the boundary edges.  
For $R > 0$, define
\begin{align}\label{eqn:HR-dsum}
H^{(R)} = H^{B_R} \oplus H^{B_R^c}.
\end{align}
We express $H_\omega$ as the sum of $H^{(R)}$ and the remaining interaction between $B_R$ and $B_R^c$. 
\begin{align*}
    H=H^{(R)}+T^R
\end{align*}
where 
\begin{align*}
   \bra{u} T^R\ket{v}=\begin{cases}
        -1, \ \ &(u,v)\in \Gamma(B_R) \\
        0, \ \ &{\rm  otherwise}
    \end{cases}.
\end{align*}

Denote by $G^{(R)}$ the resolvent of $H^{(R)}$. From \eqref{eqn:HR-dsum}, we have
\begin{align}\label{eqn:GR-dsum}
G^{(R)} = G^{B_R} \oplus G^{B_R^c}.
\end{align}
The second resolvent identity (see, e.g., \cite[Equation (11.10)]{aizenman2015random}) gives
\begin{align}\label{eqn:GR-decomp}
G(x,y;z) = G^{(R)}(x,y;z) - G^{(R)} T^R G(x,y;z)
\end{align}
and
\begin{align*}
G(x,y;z) = G^{(R+1)}(x,y;z) - G T^{R+1} G^{(R+1)}(x,y;z).
\end{align*}
Combing the two identities gives
\begin{align}\label{eqn:resolv-2}
   G(x,y;z)=G^{(R)}(x,y;z)-G^{(R)}T^RG^{(R+1)}(x,y;z)+G^{(R)}T^R GT^{R+1}G^{(R+1)}(x,y;z). 
\end{align} 
For $x = u_0$ and $d(u_0,y) \ge R+2$, we have $x \in B_R$ and $y \in B_{R+1}^c$. Then the first term vanishes due to \eqref{eqn:GR-dsum}:
\begin{align*}
    G^{(R)}(u_0,y;z)=0.
\end{align*}
The second term also vanishes since
\begin{align}
  G^{(R)}T^RG^{(R+1)}(u_0,y;z)=&\sum_{u,v}
  \bra{u_0}  G^{(R)} \ket{u} 
  \bra{u}  T^{R} \ket{v} 
  \bra{v}  G^{(R+1)} \ket{y} \notag \\
   =&-\sum_{(u,v)\in \Gamma(B_R)}
  \bra{u_0}  G^{(R)} \ket{u}  
  \bra{v}  G^{(R+1)} \ket{y},\label{eqn:temp330}
\end{align} 
 where we have dropped the implicit parameter $z$ for notational convenience.
Recall the definition of a boundary pair: $(u,v) \in \Gamma(B_R)$ means either $u \in B_R$ and $v \in B_{R+1}$, or $u \in B_{R+1}$ and $v \in B_R$. In either case, $v \in B_{R+1}$. Since $y \in B_{R+1}^c$, \eqref{eqn:GR-dsum} (with $R+1$) implies that \eqref{eqn:temp330} vanishes.

The remaining term in $G$ is 
\begin{align}
    G(u_0,y;z)=& 
    \bra{u_0}{G^{(R)}T^R GT^{R+1}G^{(R+1)}}\ket{y}
\notag    \\
    =&\sum_{(u,u')\in \Gamma (B_R)}\  \sum_{(v,v')\in \Gamma( B_{R+1})} \bra{u_0} G^{(R)} \ket{u}\bra{u'} G \ket{v}\bra{v'} G^{(R+1)} \ket{y} \notag \\
     =&\sum_{(u,u')\in \Gamma( B_R)} \ \sum_{(v,v')\in  \Gamma( B_{R+1})} \bra{u_0} G^{B_R} \ket{u}\bra{u'} G \ket{v}\bra{v'} G^{B_{R+1}^c} \ket{y} .\label{eqn:3.15}
\end{align}
In the last line, we used $u_0,u \in B_R \Longrightarrow  \bra{u_0} G^{(R)} \ket{u}=\bra{u_0} G^{B_R} \ket{u}$ and $v',y\in B_{R+1}^c   \Longrightarrow  \bra{v'} G^{(R+1)} \ket{y}=\bra{v'} G^{B_{R+1}^c} \ket{y}$.
Then, for $s<1$,
\begin{align}\label{eqn:temp333}
    |G(u_0,y;z)|^s\le \sum_{(u,u')\in \nabla B_R} \sum_{(v,v')\in \nabla B_{R+1}} \Big|\bra{u_0} G^{B_R} \ket{u}\bra{u'} G \ket{v}\bra{v'} G^{B_{R+1}^c} \ket{y}\Big|^s. 
\end{align}

Taking the expectation and rewriting the bra-ket notation as $\bra{x}A\ket{y} = A(x,y)$, we obtain 
\begin{align}\label{eqn:temp334}
   \E |G(u_0,y;z)|^s\le \sum_{\substack{(u,u')\in \Gamma( B_R)\\ (v,v')\in \Gamma( B_{R+1})}} 
    \E_{\ne u',v}\E_{u',v}\Big( \big| G^{B_R}(u_0,u;z)  G (u',v;z)  G^{B_{R+1}^c}(v',y;z)\big|^s \Big),
\end{align}
where we have written $\E_{\ne u'v}$ to denote the expectation over all random variables $V_\omega(x)$ for $x\ne u',v$, and $\E_{u',v}$ for that over $V_\omega(u')$ and $V_\omega(v)$.
Notice that $G^{B_R}$ and $G^{B_{R+1}^c}$ are Green’s functions restricted to subsets and therefore depend only on the random variables $\{V_\omega(x)\}_{x \in B_R}$
and $\{V_\omega(x)\}_{x \in B_{R+1}^c}$, respectively. Since $u',v \notin B_R$, $G^{B_R}(u_0,u)$ is independent of \(V_\omega(u'),V_\omega(v)\).

Similarly, since $u',v \notin B_{R+1}^c$, $G^{B_{R+1}^c}(v',y)$ is also independent of \(V_\omega(u'),V_\omega(v)\).
Hence, 
\begin{align*}
    \E_{u',v}\Big( \big| G^{B_R}(u_0,u)  G (u',v)  G^{B_{R+1}^c}(v',y)\big|^s \Big)=& \big| G^{B_R}(u_0,u)\big|^s \E_{u',v}\Big( |  G (u',v)\big|^s \Big)|   G^{B_{R+1}^c}(v',y)\big|^s \\
    \le &  C_1  \big| G^{B_R}(u_0,u)\big|^s  \big|   G^{B_{R+1}^c}(v',y)\big|^s     
\end{align*}
where we used the a priori bound $\E_{u',v}|G|^s \le C_1 = C_1(s,\tau)$ from Lemma~\ref{lem:priori-G}. Then 
\begin{align*}
     \E\Big( \big| G^{B_R}(u_0,u)  G (u',v)  G^{B_{R+1}^c}(v',y)\big|^s \Big) 
    =&\E_{\neq u',v}\E_{u',v}\Big( \big| G^{B_R}(u_0,u)  G (u',v)  G^{B_{R+1}^c}(v',y)\big|^s \Big) \\
    \le &C_1  \E_{\neq u',v}\Big( \big| G^{B_R}(u_0,u)\big|^s   \big|   G^{B_{R+1}^c}(v',y)\big|^s \Big) \\
     =& C_1   \E \Big( \big| G^{B_R}(u_0,u)\big|^s  \Big) \E \Big( \big|   G^{B_{R+1}^c}(v',y)\big|^s \Big)  .
\end{align*}
In the last two lines, we used that $G^{B_R}(u_0,u)$ and $G^{B_{R+1}^c}(v',y)$ are independent of each other and of $V_\omega(u'), V_\omega(v)$, which correspond to vertices in $B_{R+1}\setminus B_R$.

Substituting this into \eqref{eqn:3.15},
\begin{align*}
    \E \big|G(u_0,y;z)\big|^s\le \sum_{\substack{(u,u')\in \Gamma( B_R)\\ (v,v')\in \Gamma( B_{R+1})}}C_1   \E \Big( \big| G^{B_R}(u_0,u)\big|^s  \Big) \E \Big( \big|   G^{B_{R+1}^c}(v',y)\big|^s \Big)  .
\end{align*}
Notice that 
$u \in B_R(u_0,R)$ and $d(u_0,u) = R$. Therefore, the initial estimate \eqref{eqn:initial-Gs} applies to $G^{B_R}(u_0,u)$, which gives
\begin{align*}
  & \E \big|G(u_0,y;z)\big|^s \\ 
  \le & C_1\Big(\sum_{ (u,u')\in \Gamma( B_R)   }     CR^{-\alpha_1}  \Big)\sum_{   (v,v')\in\nabla B_{R+1} } \E \Big( \big|   G^{B_{R+1}^c}(v',y)\big|^s \Big)      \\
  \le &  C_1 \Big(   \sup_{\V}\deg(x)\cdot \vol(B_{R}\setminus B_{R-1}  ) \,     CR^{-\alpha_1} \Big)\    \sup_{\V}\deg(x)     \sum_{   \substack{v'\in \V \\   d(v',u_0)=R+2}   } \E \Big( \big|   G^{B_{R+1}^c}(v',y)\big|^s \Big) \\
   \le &  C_2  R^{\alpha-\alpha_1}  \,    \sum_{   \substack{v'\in \V \\   d(v',u_0)=R+2} } \E \Big( \big|   G^{B_{R+1}^c}(v',y)\big|^s \Big),
\end{align*}
where we   used $\vol(B_R\setminus B_{R-1})\le\vol(B_R)\le CR^\alpha$.


\subsection{Depleted Green Function Estimate and the Inductive Step}
Lemma~\ref{lem:ini-induc} implies that $\E|G(x,y;z)|^s \le  \rho \E|G(u_1,y;z)|^s$ for some $\rho<1$ and some $u_1$ located at distance $R+2$ from $x = u_0$, provided $3\alpha<\alpha_1$. To prove Theorem~1, we proceed along a path from $x$ to $y$ in approximately $d(x,y)/R$ steps. An appropriate induction then yields $\E|G(x,y;z)|^s \lesssim \rho^{d(x,y)/R} \approx e^{-c d(x,y)}$, which is the desired decay of the fractional moments. The key ingredient in this induction is the following procedure for the depleted Green’s function.

\begin{lemma}\label{lem:dep-green}
Let $H_\omega$ be a random operator of the form \eqref{eqn:AM} with an i.i.d.\ random potential whose single-site distribution $P_0$ is compactly supported and satisfies \eqref{eqn:tau-holder} for some $\tau \in (0,1]$. Then, for any $s \in (0,\tau/2)$, there exists a constant $C_1$ 
such that for any ball $W = B(x_0,R) \subset \V$, any $z = E + i\eps \in \C \setminus \R$, and any $u,y \in W^c = \V \setminus W$,
    \begin{align}\label{eqn:depleted-G}
        \E\big|G^{W^c}(u,y;z)\big|^s\le \E\big|G(u,y;z)\big|^s+C_1  \sum_{ \substack{v\in \V\\ d(x_0,v)=R }  }\E\big|G(v,y;z)\big|^s
    \end{align}
\end{lemma}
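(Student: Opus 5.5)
Decompose the full operator across the ball $W$ and compare $G^{W^c}$ with $G$ by the resolvent identity, then handle the resulting boundary term with a decoupling/re-sampling argument. Write $H = H^{W}\oplus H^{W^c} + T^{W}$, where $T^W$ is the hopping across the boundary edges $\Gamma(W)$. For $u,y\in W^c$, the second resolvent identity gives
\begin{align*}
G^{W^c}(u,y;z) = G(u,y;z) + \sum_{(v,w)\in\Gamma(W),\, v\in W,\, w\in W^c} G^{W^c}(u,w;z)\, G(v,y;z),
\end{align*}
since $G^{(W)}(u,\cdot)$ vanishes on $W$ when $u\in W^c$. Here the sum runs over $w\in\partial W^c=\partial^+W$ and $v\in\partial W$. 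Because $W=B(x_0,R)$ and the graph metric is used, $v\in\partial W$ forces $d(x_0,v)=R$. Using $|a+b|^s\le|a|^s+|b|^s$ for $s<1$, we get
\begin{align*}
\E|G^{W^c}(u,y)|^s\le \E|G(u,y)|^s+\sum_{(v,w)}\E\big(|G^{W^c}(u,w)|^s|G(v,y)|^s\big).
\end{align*}
By bounded degree \eqref{eqn:bded-geo}, each $v$ appears in at most $M$ pairs. So it suffices to prove $\E(|G^{W^c}(u,w)|^s|G(v,y)|^s)\le C\,\E|G(v,y)|^s$ uniformly in $u,w,v,y$.

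This is the main obstacle. The two factors are correlated, and $G(v,y)$ may be small exactly where $G^{W^c}(u,w)$ is large. The plan is the standard device from \cite{aizenman2015random} (Chapter 11, ``decoupling and re-sampling''). Condition on all potentials except $V(w)$, or except $V(u),V(w)$, and use that $G^{W^c}(u,w)$ depends on $V(w)$ through a rank-one formula, $G^{W^c}(u,w)=\frac{A}{V(w)-\gamma}$ with $A,\gamma$ independent of $V(w)$. Likewise $G(v,y)$ depends on $V(w)$ as a Möbius function $\frac{a V(w)+b}{cV(w)+d}$ by Krein's formula. The key ingredient is a decoupling lemma: for $\tau$-Hölder $P_0$ with compact support and $s<\tau/2$, there is $C$ such that for all complex parameters
\begin{align*}
\int |V-\gamma|^{-s}\,\Big|\frac{aV+b}{cV+d}\Big|^{s}\, dP_0(V)\le C \int \Big|\frac{aV+b}{cV+d}\Big|^{s}\, dP_0(V)\cdot\Big(\int|V-\gamma|^{-s}dP_0\Big).
\end{align*}
We apply it in the form of \cite[Lemma 11.3/Theorem 11.? ]{aizenman2015random}: the weight $|V-\gamma|^{-s}$ only increases the average of the Möbius factor by a bounded constant. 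This needs $2s<\tau$, via Hölder applied to $|V-\gamma|^{-2s}$, and compact support, to control $|aV+b|^s$ from below on a set of positive mass. The factor $|A|^s$ and the a priori bound (Lemma~\ref{lem:priori-G}) on $\E_w|G^{W^c}(u,w)|^s$ then give a constant $C_1(s,\tau,\supp P_0)$.

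If $u\ne w$, we first perform a one-variable decoupling in $V(u)$ in the same way, or just use the two-site a priori bound on the pair $(u,w)$. This yields
\begin{align*}
\E\big(|G^{W^c}(u,w)|^s|G(v,y)|^s\big)\le C\,\E|G(v,y)|^s.
\end{align*}
Summing over the at most $M$ partners $w$ of each $v$ with $d(x_0,v)=R$ gives \eqref{eqn:depleted-G}.
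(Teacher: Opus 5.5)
Your proposal is correct and follows the paper's proof: the same resolvent expansion of $G^{W^c}$ across the boundary edges $\Gamma(W)$, subadditivity of $|\cdot|^s$, and the geometric decoupling bound $\E\big[|G^{W^c}(u,w)|^s\,|G(v,y)|^s\big]\le C\,\E|G(v,y)|^s$. The paper proves that bound (Lemma~\ref{lem:GS-decoup}) exactly as you sketch: apply the one-site decoupling inequality \eqref{eqn:decoup-AW8.11} in $V_\omega(w)$ and then in $V_\omega(u)$, with an independent re-sampled copy inserted between the two steps, and finish with the two-site a priori bound.

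One caution: your alternative ``just use the two-site a priori bound on $(u,w)$'' is not enough by itself. After decoupling in $V_\omega(w)$ alone, the factors $\E_w|G^{W^c}(u,w)|^s$ and $\E_w|G(v,y)|^s$ are still correlated through $V_\omega(u)$, so the second decoupling, made possible by the re-sampling, is genuinely needed.
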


\begin{remark}
Results of this type were first established in \cite[Lemma 2.3]{aizenman01finitevol} for the $\Z^d$ case; see also \cite[Lemma 11.4]{aizenman2015random} for an alternative formulation. The proof for $\Z^d$ uses the a priori bound \eqref{eqn:priori-xy} and geometric decoupling \eqref{eqn:GS-decoup}, and it extends to the general graphs considered in this paper. For completeness, we sketch the proof below, following the outline in \cite[Lemma 11.4]{aizenman2015random}.
\end{remark}

\begin{proof}
 Let $H^{(W)} = H^{W} \oplus H^{W^c}$ and $T^W = H_\omega - H^{(W)}$ be as in \eqref{eqn:HR-dsum}. Then, as in \eqref{eqn:GR-decomp}, we have
    \begin{align*}
            G=G^{(W)}-G^{(W)}T^WG  
         \Longrightarrow    G^{(W)}=G +G^{(W)}T^WG.
    \end{align*}
 For $u,y \in W^c$, observe that $G^{(W)}(u,y;z) = \bra{u} G^{W^c} \ket{y}$, and
    \begin{align*}   \bra{u}G^{W^c}\ket{y}=&\bra{u}G \ket{y}+ \sum_{v,v'}
  \bra{u}  G^{(W)} \ket{v} 
  \bra{v}  T^{W} \ket{v'} 
  \bra{v'}  G  \ket{y}    \\
  =&\bra{u}G \ket{y}+ \sum_{(v,v')\in \Gamma( W) }
  \bra{u}  G^{(W)} \ket{v} 
  \bra{v}  T^{W} \ket{v'} 
  \bra{v'}  G  \ket{y}   \\
  =&\bra{u}G \ket{y}- \sum_{\substack{(v,v')\in \Gamma( W) \\
  v\in W^c, v'\in W
  }}
  \bra{u}  G^{W^c} \ket{v} 
  \bra{v'}  G  \ket{y} . 
    \end{align*}
Then, similar to \eqref{eqn:temp333} and \eqref{eqn:temp334}, we obtain
    \begin{align}
  \E \big|G^{(W)}(u,y;z)\big|^s      \le \E \big|G (u,y;z)\big|^s +  \sum_{\substack{(v,v')\in \Gamma( W) \\
  v\in W^c, v'\in W
  }} \E\Big[
  \big| G^{W^c}(u,v;z) \big|^s  \cdot  
  \big| G(v',y;z) \big|^s \Big] .\label{eqn:temp352}
    \end{align} 
The second term on the right-hand side is estimated using the geometric decoupling inequality \eqref{eqn:GS-decoup}:
\begin{align*}
    \E\Big[
  \big| G^{W^c}(u,v;z) \big|^s  \cdot  
  \big| G(v',y;z) \big|^s \Big]\le C \E\Big[
  \big| G(v',y;z) \big|^s \Big] 
\end{align*}
for some constant $C$ depending only on $s$ and $\tau$. Substituting this into \eqref{eqn:temp352} gives \eqref{eqn:depleted-G} with $C_1 = MC$, where $M$ is the maximal vertex degree as in \eqref{eqn:bded-geo}.

\end{proof}

Applying the above lemma to the right-hand side of \eqref{eqn:induc-step0} inductively, we arrive at:
\begin{proof}[Proof of Theorem~\ref{thm:FMbound}]
 It suffices to prove exponential decay for $d(x,y)$ sufficiently large. 
 We start with $x = u_0$ and $d(x,y) \ge R_0 + 2$ for some $R_0$ to be determined later. Combining the initial induction \eqref{eqn:induc-step0} and the depleted Green’s function estimate \eqref{eqn:depleted-G} with $W = B(u_0,R+1)$, we obtain
\begin{align*}
     & \E \big|G(u_0,y;z)\big|^s \\
   \le &    C_2R^{\alpha-\alpha_1}\cdot   \sum_{  \substack{ v'\in \V \\   d(v',u_0)= R+2 } }\Big(\E\big|G(v',y;z)\big|^s+C_1  \sum_{ d(u_0,v)=R+1}\E\big|G(v,y;z)\big|^s\Big) \\
\le &     C_2 R^{\alpha-\alpha_1}\cdot  c_2 (R+2)^\alpha\Big(\sup_{  \substack{ v'\in \V \\   d(v',u_0)= R+2 } }\E\big|G(v',y;z)\big|^s+C_1 c_2(R+1)^\alpha  \sup_{ d(u_0,v)=R+1}\E\big|G(v,y;z)\big|^s\Big)\\
\le & C_2'\, R^{3\alpha-\alpha_1}\,   \cdot \sup_{ d(u_0,u_1)\le R+2}\E\big|G(u_1,y;z)\big|^s ,\end{align*}
where $C_2'$ depends only on $C_2, c_2,$ and $\alpha$. Since \(\alpha_1>3\alpha\), choose $R = R_0$ sufficiently large so that
\begin{align*}
    \rho: =C_2'\, R^{3\alpha-\alpha_1}<1,
\end{align*}
which implies that, for $E < E_0(R_0,\alpha,\beta)$ and $y \in \V$ with $d(u_0,y) \ge R_0 + 2$, 
\begin{align}\label{eqn:u0-u1}
   \E \big|G(u_0,y;z)\big|^s
   \le \rho \, \cdot   \sup_{u_1\in B(u_0,R_0+2)}\E\big|G(u_1,y;z)\big|^s .
\end{align}
Repeating this step for $u_1 \in B(u_0,R_0+2)$ gives
\begin{align}
    \E\big|G(u_1,y;z)\big|^s
   \le \rho   \sup_{u_2\in B(u_1,R_0+2)}\E\big|G(u_2,y;z)\big|^s . \label{eqn:u1u2}
\end{align}
Combining \eqref{eqn:u0-u1} and \eqref{eqn:u1u2}, we obtain 
\begin{align*}
   \E \big|G(x,y;z)\big|^s
   \le &  \rho   \sup_{u_1\in B(x,R_0+2)}\E\big|G(u_1,y;z)\big|^s  \\ 
 \le &  \rho^2   \sup_{u_1\in B(x,R_0+2)}  \Big(\sup_{u_2\in B(u_1,R_0+2)}\E\big|G(u_2,y;z)\big|^s \Big)  \\
 \le & \rho^2  \sup_{u_2\in B(x,2(R_0+2))}\E\big|G(u_2,y;z)\big|^s.
\end{align*}
Inductively, we can apply this process at least $n = \lfloor \frac{d(x,y)}{R_0+2} \rfloor \ge \frac{d(x,y)}{R_0} - 1$ times before we run into the possibility that $d(u_n,y)<R_0+2$. Thus stopping at $n = \lfloor \frac{d(x,y)}{R_0+2} \rfloor$ before this can occur, we obtain
\begin{align*}
     \E \big|G(x,y;z)\big|^s\le  
   \rho^n  \sup_{u_n\in B\big(x,n(R_0+2)\big)}\E\big|G(u_n,y;z)\big|^s  
   \le   \rho^{\frac{d(x,y)}{ R_0}-1}\, C_3,
\end{align*}
where the last term is bounded by the a priori estimate \eqref{eqn:priori-all} with some constant $C_3$ depending only on $s,\lambda,$ and $\tau$. Hence, for all $d(x,y) \ge R_0 + 2$, $E < E_0$, and $z \in \C \setminus \R$,
\begin{align*}
     \E \big|G(x,y;z)\big|^s \le  \frac{C_3}{\rho}  e^{-\mu d(x,y) }, \ \ \mu=\Big|\frac{\log\rho}{R_0}\Big|,
\end{align*}
which proves the fractional moment bound \eqref{eqn:FM-bound}.
\end{proof}


\section{Lifshitz Tail Estimate}\label{sec:lif}
\subsection{(Modified) Neumann Bracketing and the Upper Bound}
In this part, we show that the Neumann eigenvalue lower bound \eqref{eqn:N-ev-lower} implies the Lifshitz-tail upper estimate. The proofs of \eqref{eqn:exp-lif-upper} and \eqref{eqn:infi-lif-up} rely on the same procedure, namely (modified) Neumann bracketing. We begin with the following covering lemma, a standard result for graphs satisfying the volume-growth condition \eqref{eqn:vol-control}; see, for example, \cite[Lemma 6.2]{barlow2017random}.
\begin{lemma}\label{lem:covering}
Suppose the graph $(\V,\mathcal E)$ satisfies \eqref{eqn:vol-control} for some $\alpha \ge 1$. There exist constants $C_i \in (0,\infty)$, depending only on $c_1,c_2$ in \eqref{eqn:vol-control}, such that for any $B(x,R)$ and any $r\le R$, there is a finite cover $\bigcup_{i\in I} B(x_i,r) \supset B(x,R)$ with the following properties:
    \begin{enumerate}
        \item The balls $B(x_i,r/2)$ are disjoint
        \item Every point $y\in B(x,R)$ is covered by at most $C_1$   balls $B(x_i,r)$
        \item    The number of balls in the cover, denoted by $|I|$, satisfies 
        \begin{align}
  C_2 \frac{R^\alpha}{r^\alpha}\le          |I|\le C_3\frac{R^\alpha}{r^\alpha}. 
        \end{align} 
 
    \end{enumerate}
\end{lemma}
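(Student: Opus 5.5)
The plan is a standard maximal-packing argument. Fix $B(x,R)$ and $r\le R$, and choose a maximal subset $\{x_i\}_{i\in I}\subset B(x,R)$ with pairwise distances $d(x_i,x_j)>r$. Since $B(x,R)$ is finite by \eqref{eqn:vol-control}, such a maximal set exists and is finite.

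\emph{Covering.} If some $y\in B(x,R)$ had $d(y,x_i)>r$ for every $i$, we could add $y$ to the set, contradicting maximality. Hence $\bigcup_i B(x_i,r)\supset B(x,R)$.

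\emph{Disjointness (property (1)).} If $z\in B(x_i,r/2)\cap B(x_j,r/2)$ with $i\neq j$, the triangle inequality gives $d(x_i,x_j)\le r$, which is a contradiction. This is where the separation scale $r$ gets used.

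\emph{Counting (property (3)).} For the upper bound, the disjoint balls $B(x_i,r/2)$ all lie in $B(x,R+r/2)\subset B(x,2R)$. Comparing volumes with \eqref{eqn:vol-control} gives
\begin{align*}
|I|\,c_1(r/2)^\alpha\le c_2(2R)^\alpha,
\end{align*}
so $|I|\le C_3R^\alpha/r^\alpha$. Small $r$ (namely $r/2<1$) needs slight care, since \eqref{eqn:vol-control} only holds for radii $\ge 1$. In that case use $|B(x_i,r/2)|\ge1$, which still gives $|I|\le |B(x,2R)|\lesssim R^\alpha\lesssim R^\alpha/r^\alpha$ because $r<2$. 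For the lower bound, the covering gives $c_1R^\alpha\le |B(x,R)|\le |I|\,c_2 r^\alpha$ whenever $r\ge1$, so $|I|\ge C_2R^\alpha/r^\alpha$. For $r<1$ each ball $B(x_i,r)$ is a single point, so $|I|=|B(x,R)|\ge c_1R^\alpha\gtrsim R^\alpha/r^\alpha$ as $r$ ranges over a bounded set. If $R<1$ the statement is trivial up to constants.

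\emph{Bounded overlap (property (2)).} Fix $y$ and let $J=\{i: d(y,x_i)\le r\}$. For $i\in J$, the disjoint balls $B(x_i,r/2)$ lie in $B(y,3r/2)$. The volume comparison gives
\begin{align*}
|J|\,c_1(r/2)^\alpha\le c_2(3r/2)^\alpha,
\end{align*}
so $|J|\le (c_2/c_1)3^\alpha=:C_1$, again with the trivial adjustment for $r<2$ using bounded degree \eqref{eqn:bded-geo}.

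The only delicate point is the bookkeeping at small radii, where \eqref{eqn:vol-control} is not directly applicable. This is handled by the trivial bounds $|B(\cdot,\rho)|\ge1$ and $|B(\cdot,\rho)|\le c_2$ for $\rho<1$, and otherwise the argument is routine.
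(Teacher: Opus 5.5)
Your argument is the same one the paper relies on: a maximal $r$-separated subset of $B(x,R)$, as in the cited Lemma 6.2 of Barlow, followed by volume comparison. Disjointness of the half-balls, the upper count from $\bigsqcup_{i}B(x_i,r/2)\subseteq B(x,2R)$, and the lower count and bounded overlap all come from comparing volumes.

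There is one slip in your small-radius bookkeeping. For $r<1$ you have $|I|\le |B(x,R)|\le c_2R^\alpha$ independently of $r$, while $R^\alpha/r^\alpha\to\infty$ as $r\to 0$. So ``$r$ ranges over a bounded set'' does not give the lower bound in (3), and the case $R<1$ is not trivial for it either. That inequality genuinely needs $r\ge 1$ (or $r$ bounded below), which is the natural convention for graph balls and the only regime the paper uses.
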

 The lower bound in Part (3) follows by a volume comparison. For $r\le R$, the upper bound in Part (3) follows from the inclusion $\bigsqcup_{i\in I}B(x_i,r/2)\subseteq B(x,2R)$. Note that $C_3$ may depend on $\alpha$.

We now employ modified Neumann bracketing for the free Laplacian. Recall that the Neumann Laplacian, viewed as the subgraph Laplacian, satisfies a closed quadratic-form relation (see, e.g., the Discrete Gauss–Green Theorem in \cite[Theorem 1.24]{barlow2017random}): For any $f \in \C^{X}$, let $X$ be either a single ball $B(x_i,r)$ or a cover $U_R:=\bigcup_{i \in I} B(x_i,r)$ of a ball $B(x,R)$ as in Lemma~\ref{lem:covering}. Then
\begin{align}\label{eqn:Diri-form}
 \ipc{f}{-\Delta^{X,N}f}_{\ell^2(X)} = \frac{1}{2} \sum_{ \substack{x,y\in X \\ y\sim x}} \big(f(y)- f(x)\big)^2.
\end{align}
Consider the covering $U_R=\bigcup_{i\in I}B(x_i,r)\supseteq B(x,R)$ and let $C_1$ be as in Lemma~\ref{lem:covering}. Then
\begin{align*} 
C_1\ipc{f}{-\Delta^{U_R,N} f} \ge  \sum_{i\in I  }\ipc{f}{-\Delta^{B(x_i,r),N} f}.
\end{align*}
Accounting for the possible overlap of the onsite potential $V_\omega\ge0$ gives
\begin{align}\label{eqn:Neumann-bra}
    C_1\ipc{f}{(-\Delta^{U_R,N}+V_\omega) f} \ge  \sum_{i\in I  }\ipc{f}{(-\Delta^{B(x_i,r),N}+V_\omega \one_{B(x_i,r)}}) f .
\end{align}

Denote by $E_0(X)$ and $E_1(X)$ the smallest and second smallest eigenvalues of an operator $X$, respectively. Let $H^{X}$ and $H^{X,N}$ denote the restriction of $H$ to $X$ with Dirichlet and Neumann boundary conditions, respectively.
Suppose $f_0 \in \ell^2(U_R)$ is the ground state of $H^{U_R,N}$. Then \eqref{eqn:Neumann-bra} gives
    \begin{align} \label{eqn:I-bound}
    C_1 E_0\big(   H^{U_R,N}\big)=C_1\ipc{f_0}{H^{U_R,N} f_0} \ge \sum_{i \in I} \ipc{f_0}{H^{B(x_i,r),N} f_0}  .
    \end{align}

Notice that the inner product on the right-hand side is taken over each $\ell^2(B(x_i,r))$, so that we can replace $f_0$ with its projection $P_if_0$ onto $B(x_i,r)$.
Since $\bigcup_{i \in I} B(x_i,r) = U_R$, there is at least one $i \in I$ such that $P_i f_0$ is nonzero. This implies there is at least one $i$ such that
\begin{align*}
    \ipc{f_0}{H^{B(x_i,r),N} f_0}=\ipc{P_if_0}{H^{B(x_i,r),N} P_if_0}\ge E_0(H^{B(x_i,r),N}). 
\end{align*}
Hence, by the min–max principle (see, e.g., \cite[Theorem 4.13]{aizenman2015random}),
 \begin{align*} 
    C_1 E_0\big(   H^{U_R,N}\big) \ge \inf_{i \in I} E_0\big(H^{B(x_i,r),N}\big).
    \end{align*}
Also, since $\bigcup_{i \in I} B(x_i,r) = U_R$ covers $B(x,R)$, applying the min–max principle (or the Cauchy interlacing theorem) to $H^{U_R,N}$ and its (Dirichlet) restriction to $B(x,R) \subset U_R$ gives
\begin{align}
E_0\big(   H^{B(x,R)}\big) \ge     E_0\big(   H^{U_R,N}\big) .
\end{align}
Then for any $\gamma>0$, 
\begin{align}
    \P\Big(E_0\big(  H^{B(x,R)}\big)\le \gamma \Big) \le & \P\Big(\inf_{i \in I} E_0(H^{B(x_i,r),N})\le \frac{1}{C_1}\gamma \Big) \notag \\
    \le & |I|\max_i\P\Big( E_0(H^{B(x_i,r),N})\le \frac{1}{C_1}\gamma \Big) \notag \\
    \le & C_3\frac{R^\alpha}{r^\alpha}\max_i\P\Big( E_0(H^{B(x_i,r),N})\le \frac{1}{C_1}\gamma \Big).  \label{eqn:A10}
\end{align}

To prove \eqref{eqn:exp-lif-upper}, it suffices to obtain a sub-exponential bound for $\P\Big( E_0(H^{B(x_i,r),N})\le R^{-\delta} \Big)$ for an appropriate choice of $R$ and $r$. The key ingredients are Temple’s inequality and a large-deviation estimate, both classical.
\begin{proposition}[Temple \cite{temple1928theory}] \label{prop:temple}
    Let $H$ be a self-adjoint operator with an isolated non-degenerate eigenvalue $E_0=\inf \sigma(H)$, and let $E_1=\inf \big(\sigma(H) \backslash\{E_0\}\big)$
.  Then for any
$\psi\in \mathcal D (H)$  (domain of $H$), which satisfies $\ipc{\psi}{H\psi}<E_1$, $\|\psi\|=1$, the following bound holds:
\begin{align}\label{eqn:temple}
    E_0\ge \ipc{\psi}{H\psi}-\frac{\ipc{H\psi}{H\psi}-\ipc{\psi}{H\psi}^2}{E_1-\ipc{\psi}{H\psi}}. 
\end{align}
\end{proposition}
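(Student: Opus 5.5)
The statement to prove is Temple's inequality (Proposition~\ref{prop:temple}). The plan is the classical spectral-theorem argument: show that $(H-E_0)(H-E_1)$ is a nonnegative operator, take its expectation in $\psi$, and rearrange.

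\textbf{Step 1 (positivity).} Since $E_0$ is the bottom of the spectrum, is isolated, and $E_1=\inf(\sigma(H)\setminus\{E_0\})$, we have $\sigma(H)\subset\{E_0\}\cup[E_1,\infty)$. On this set the function
\[
g(\lambda)=(\lambda-E_0)(\lambda-E_1)
\]
is nonnegative: it vanishes at $\lambda=E_0$, and for $\lambda\ge E_1>E_0$ both factors are nonnegative. By the spectral theorem, with $\mu_\psi$ the spectral measure of $\psi$ (a probability measure, since $\|\psi\|=1$),
\[
\langle \psi,(H-E_0)(H-E_1)\psi\rangle=\int g\,d\mu_\psi\ge 0 .
\]

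\textbf{Domain issue.} The expression above involves $H^2$, but we only assume $\psi\in\mathcal D(H)$. Interpret it as the quadratic-form expression $\langle H\psi,H\psi\rangle-(E_0+E_1)\langle\psi,H\psi\rangle+E_0E_1$. This equals $\int g\,d\mu_\psi$ because $\int\lambda^2\,d\mu_\psi=\|H\psi\|^2<\infty$ for $\psi\in\mathcal D(H)$.

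\textbf{Step 2 (expand).} Write $a=\langle\psi,H\psi\rangle$ and $b=\|H\psi\|^2$. Step 1 gives
\[
b-(E_0+E_1)a+E_0E_1\ge0 ,
\]
which rearranges to
\[
E_0(E_1-a)\ge E_1a-b .
\]

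\textbf{Step 3 (divide).} By hypothesis $E_1-a>0$, so dividing gives
\[
E_0\ge\frac{E_1a-b}{E_1-a}.
\]
It remains to check that this equals the right side of \eqref{eqn:temple}. Indeed,
\[
a-\frac{b-a^2}{E_1-a}=\frac{aE_1-a^2-b+a^2}{E_1-a}=\frac{E_1a-b}{E_1-a}.
\]
This is exactly \eqref{eqn:temple}.

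The only delicate point is the domain/form interpretation in Step 1, which is handled by working with spectral measures. Nondegeneracy of $E_0$ is not actually needed. Isolatedness of $E_0$ is needed only so that $E_1>E_0$, which keeps the gap $[E_0,E_1]$ free of spectrum.
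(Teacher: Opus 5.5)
Your proof is correct. It is the standard argument: $g(\lambda)=(\lambda-E_0)(\lambda-E_1)\ge 0$ on $\sigma(H)\subset\{E_0\}\cup[E_1,\infty)$, and you integrate this against the spectral measure $\mu_\psi$, which is legitimate for $\psi\in\mathcal D(H)$ because $\int\lambda^2\,d\mu_\psi=\|H\psi\|^2<\infty$. The algebraic rearrangement that follows is also right.

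The paper gives no proof of this proposition; it simply quotes it from \cite{temple1928theory}. It only applies the result to the finite symmetric matrix $\wt H^{r}$ on $\ell^2(B_r)$, so your domain discussion is not needed there.

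One small refinement to your closing remark: isolatedness is not really used either. The inequality $E_1\ge E_0$ always holds, so $g\ge 0$ on $\{E_0\}\cup[E_1,\infty)$ in every case. Moreover, if $E_1=E_0$ the hypothesis $\langle\psi,H\psi\rangle<E_1$ can never be satisfied, so the statement is vacuous.
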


\begin{proposition}[Hoeffding {\cite{hoe1963}}]\label{prop:hoeffding}
           If $\{Y_k\}_{1\le k\le K}$ are i.i.d. random variables ranging in $[0,b]$, then for any $\eps>0$, 
\begin{align}
            \P\Big(\frac{1}{K}\sum_{k=1}^KY_k\le \E(Y_k)-\eps\Big)\le e^{-2K\varepsilon^2/b^2}.  \label{eqn:ldt}
        \end{align} 
    \end{proposition}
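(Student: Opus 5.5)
The statement is a pure probability fact about bounded i.i.d.\ variables, so the plan is the classical route: an exponential Chernoff bound, a bound on the moment generating function of a bounded centered variable (Hoeffding's lemma), and an optimization over the free parameter. Nothing from the graph setting is needed.

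\emph{Step 1 (centering and Chernoff).} Set $m=\E(Y_k)$ and $Z_k=m-Y_k$. Then $\E Z_k=0$, and $Z_k\in[a,a+b]$ with $a=m-b\le 0\le m=a+b$. The event in \eqref{eqn:ldt} is $\{\sum_{k=1}^K Z_k\ge K\eps\}$. For any $t>0$, Markov's inequality applied to $e^{t\sum Z_k}$ and independence give
\begin{align*}
\P\Big(\sum_{k=1}^K Z_k\ge K\eps\Big)\le e^{-tK\eps}\big(\E e^{tZ_1}\big)^K .
\end{align*}

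\emph{Step 2 (Hoeffding's lemma).} I will show $\E e^{tZ}\le e^{t^2b^2/8}$ for a mean-zero $Z\in[a,a+b]$.
\begin{enumerate}
\item Convexity of $z\mapsto e^{tz}$ on $[a,a+b]$ gives the chord bound
\begin{align*}
e^{tz}\le \frac{a+b-z}{b}e^{ta}+\frac{z-a}{b}e^{t(a+b)} .
\end{align*}
\item Taking expectations and using $\E Z=0$, with $p=-a/b\in[0,1]$ and $h=tb$, yields $\E e^{tZ}\le e^{L(h)}$, where
\begin{align*}
L(h)=-hp+\log\big(1-p+pe^{h}\big).
\end{align*}
\item One checks $L(0)=0$ and $L'(0)=0$.
\item $L''(h)=q(1-q)$ with $q=pe^h/(1-p+pe^h)\in[0,1]$, hence $L''\le 1/4$.
\item Taylor's theorem with remainder then gives $L(h)\le h^2/8=t^2b^2/8$.
\end{enumerate}
I expect this to be the only step with real content; the main point to get right is the identification of $L''$ as a variance $q(1-q)$ of a Bernoulli variable, which is bounded by $1/4$.

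\emph{Step 3 (optimization).} Combining Steps 1 and 2 gives the bound $\exp\big(-tK\eps+Kt^2b^2/8\big)$. Minimizing over $t>0$ at $t=4\eps/b^2$ produces $\exp(-2K\eps^2/b^2)$, which is \eqref{eqn:ldt}. Alternatively, since this is a classical result, the paper may simply cite \cite{hoe1963}, and the above is the standard argument behind that citation.
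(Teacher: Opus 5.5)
Your proof is correct: the centering, the Chernoff bound, the chord/convexity form of Hoeffding's lemma with $L''(h)=q(1-q)\le 1/4$, and the choice $t=4\eps/b^2$ together give exactly $e^{-2K\eps^2/b^2}$. The paper gives no proof of its own and simply cites Hoeffding's 1963 paper, whose argument is essentially the one you reproduce.
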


Given a ball $B_r = B(x_i, r) \subset \V$ for some $x_i \in \V$ and $r > 0$, let $\Delta^{B_r,N}$ denote the Neumann Laplacian on $B_r$, defined in \eqref{eqn:N-lap}. From \eqref{eqn:Diri-form}, the smallest eigenvalue of $-\Delta^{B_r,N}$ satisfies  $E_0(-\Delta^{B_r,N})=0$. 
 Let $0 < E_1(-\Delta^{B_r,N})$ be the second smallest eigenvalue (i.e., the first non-zero eigenvalue) of $-\Delta^{B_r,N}$.
Following Assumption~\ref{assume:N-ev}, suppose that for some $\beta \ge 2$, $c_0 > 0$, and $r > 0$, and for any ball $B_r=B(x_i,r)$ in Lemma~\ref{lem:covering},
\begin{align}\label{eqn:A4}
    E_1(- \Delta^{B_r,N}) \ge \frac{c_0}{r^\beta}. 
\end{align} 
We define a truncated potential by
    \begin{align}\label{eqn:wtV-def}
        \wt V(x):=\min \left\{ V_\omega(x), \frac{c_0}{3}r^{-  \beta }\right\}.
    \end{align} 
Let $\wt H^{r} := -\Delta^{B_r,N} + \wt V$. Then $\wt H^{r} \le H^{B_r,N}$ by the definition of $\wt V$. By the variational or min-max principle,
    \begin{align}\label{eqn:SGE00}
      E_0\big( \wt H^{ r}\big)\le   E_0\big(   H^{B_r,N}\big).
    \end{align}
 Combining the min–max principle with the bound $\wt H^{r} \ge -\Delta^{B_r,N}$ and the lower bound \eqref{eqn:A4}, we also obtain
\begin{align}E_1\big( \wt H^{ r}\big)\ge E_1(-\Delta^{B_r ,N})\ge  \frac{c_0}{r^{  \beta }}. 
\end{align}

Next we bound $E_0\big( \wt H^{r} \big)$ from below using Temple’s inequality. Let $\psi(x) = |B_r|^{-1/2}$ for $x \in B_r$ be the normalized constant ground state of the Neumann Laplacian $-\Delta^{B_r,N}$; it satisfies $\Delta^{B_r,N}\psi = 0$ and
\begin{align*}
       \ipc{\psi}{\wt H^{ r}\psi}=\ipc{\psi}{\wt V \psi}= \frac{1}{|B_r |}\sum_{x\in B_r }\wt V(x)\le \frac{c_0}{3}r^{-  \beta }< E_1\big( \wt H^{ r}\big).
     \end{align*}
Thus the conditions of Temple’s inequality are satisfied with $H=\wt H^r$, $E_0 = E_0\big( \wt H^{r} \big)$, $E_1 = E_1\big( \wt H^{r} \big)$, and $\psi$ the normalized ground state of the Neumann Laplacian.

The second term on the right-hand side of Temple’s inequality \eqref{eqn:temple} can be estimated as
     \begin{align}
   \frac{\ipc{\wt H^{ r}\psi}{\wt H^{ r}\psi}-\ipc{\psi}{\wt H^{ r}\psi}^2}{E_1-\ipc{\psi}{\wt H^{ r}\psi}}  \le  &\,     \frac{\ipc{\wt H^{ r}\psi}{\wt H^{ r}\psi} }{ E_1-\ipc{\psi}{\wt H^{ r}\psi}} 
   =  \frac{\ipc{\wt V \psi}{\wt V \psi} }{ E_1-\ipc{\psi}{\wt H^{ r}\psi}}\nonumber \\
   \le &\, \frac{\frac{c_0}{3}r^{-  \beta }|B_r |^{-1}\sum_{x\in B_r }\wt V(x) }{ c_0r^{-  \beta }-\frac{c_0}{3}r^{-  \beta }}=\frac{1}{2|B_r |}\sum_{x\in B_r }\wt V(x).  \label{eqn:5.14}
     \end{align}
Substituting \eqref{eqn:5.14} into \eqref{eqn:temple}, we obtain
     \begin{align}
      E_0\big( \wt H^{ r}\big)\ge &\,    \ipc{\psi}{\wt H^{ r}\psi}-\frac{\ipc{\wt H^{ r}\psi}{\wt H^{ r}\psi}-\ipc{\psi}{\wt H^{ r}\psi}^2}{E_1-\ipc{\psi}{\wt H^{ r}\psi}} \notag \\
  \ge &\,     \frac{1}{|B_r |}\sum_{x\in B_r }\wt V(x)-\frac{1}{2|B_r |}\sum_{x\in B_r }\wt V(x)=\frac{1}{2|B_r |}\sum_{x\in B_r }\wt V(x). \label{eqn:Temple-app}
     \end{align}
Observe that $\{ r^\beta  \wt V(x) \}_{x \in B_r}$ are i.i.d.\ random variables with range in $[0, c_0/3]$ and common mean
\begin{align*}
    \mu_r=\E\Big(\min\big\{r^{  \beta}V_\omega(x),\ \frac{c_0}{3} \big\}\Big)\ge \frac{c_0}{3}\P\big(r^{  \beta}V_\omega(x)>\frac{c_0}{3}\big)=\frac{c_0}{3}\Big[1-\P\big(V_\omega(x)\le\frac{c_0}{3 r^{  \beta}}\big)\Big]. 
\end{align*}
Therefore, 
\begin{align}\label{eqn:inf-mu-0}
    \varliminf_{r\to \infty} \mu_r\ge \frac{c_0}{3}[1-\P(V_\omega(x)=0)]=:\frac{c_0}{3}p_1>0, 
\end{align}
since $p_0 = 1 - p_1 = \P(V(x) = 0) < 1$ (since the distribution is non-trivial, i.e., its support contains more than one point). Then for $r$ sufficiently large (depending on $\beta$ and $p_0$),
\begin{align}\label{eqn:inf-mu}
     \mu_r\ge  \frac{c_0}{4}p_1>0, 
\end{align}

\begin{proof}[Proof of Theorem~\ref{thm:lif}]
    
For any $\delta \in (0,1)$, and for $C_1$ specified in \eqref{eqn:A10}, $c_0,r$ specified in \eqref{eqn:A4}, and  $p_1$ specified in \eqref{eqn:inf-mu}, define
\begin{align}\label{eqn:A23}
    R=  \left(\frac{ 12}{C_1c_0p_1}\right)^{\frac{1}{\delta}} r^{\frac{\beta
    }{\delta}}, 
\end{align}
 so that 
 \begin{align}\label{eqn:A24}
     \frac{2}{C_1}r^\beta R^{-\delta}= \frac{c_0p_1}{6}, \qquad {\rm and} \ \ r= \left(\frac{C_1c_0p_1}{ 12}\right)^{\frac{1}{\beta}}R^{\frac{\delta}{\beta}}.
 \end{align}
 Notice that the definition of graph balls in \eqref{eqn:vol-control} allows the radius $R$ to take non-integer values.

Combining \eqref{eqn:A10}, \eqref{eqn:SGE00}, and \eqref{eqn:Temple-app}, we obtain
\begin{align}
  \P\Big(E_0\big(  H^{B(x,R)}\big)\le R^{-\delta}\Big) \le &   C_3 R^{\alpha} \P\Big( E_0(H^{B(x_i,r),N})\le \frac{1}{C_1}R^{-\delta} \Big)\label{eqn:5.21}\\
  \le & C_3 R^{\alpha} \P\Big( E_0(\wt H^{r})\le \frac{1}{C_1}R^{-\delta} \Big) \notag \\ 
 \le & C_3 R^{\alpha}
   \P \Big( \frac{1}{2|B_r |}\sum_{x\in B_r }\wt V(x)\le \frac{1}{C_1}R^{-\delta} \Big) \notag \\  
\le &  C_3 R^{\alpha} \P \Big( \frac{1}{|B_r |}\sum_{x\in B_r }r^{  \beta}\wt V(x)\le \frac{2}{C_1}r^{\beta}R^{-\delta} \Big) \notag \\
= &  C_3 R^{\alpha} \P \Big( \frac{1}{|B_r |}\sum_{x\in B_r }r^{  \beta}\wt V(x)\le \frac{c_0p_1}{6} \Big). \label{eqn:A29}
\end{align}
Finally, applying Hoeffding’s inequality \eqref{eqn:ldt} with $\{ Y_k \}_{k \in K} = \{ r^\beta \wt V(x) \}_{x \in B_r}$, $K = |B_r|$, and $\varepsilon = \frac{c_0 p_1}{12}$
to the right-hand side of \eqref{eqn:A29} along with \eqref{eqn:inf-mu} gives
\begin{align*}
  \P \Big( \frac{1}{|B_r |}\sum_{x\in B_r }r^{  \beta}\wt V(x)\le \frac{c_0p_1}{6} \Big) 
  \le &\,\P \Big( \frac{1}{|B_r |}\sum_{x\in B_r }r^{  \beta}\wt V(x)-\mu_r\le \frac{c_0p_1}{6}-\frac{c_0p_1}{4} \Big) \\
  \le &\,\P \Big( \frac{1}{|B_r |}\sum_{x\in B_r }r^{  \beta}\wt V(x)-\mu_r\le  -\frac{ c_0p_1}{12} \Big) \\
  \le &\, e^{-c|B_r |},
\end{align*}
where $c > 0$ depends only on $c_0$ and $p_1$.

Using the volume lower bound $|B_r| \ge c_1 r^\alpha=c_1'R^{\frac{\delta \alpha}{\beta}}$ from \eqref{eqn:A23}, we conclude
\begin{align} \label{eqn:434}
  \P\Big[E_0\big(  H^{B(x,R)}\big)\le R^{-\delta}\big]\le    C_3 R^{\alpha}  \exp\big({-c|B_r |}\big) \le C_3 \exp\big({-\eta R^{\frac{\delta \alpha}{\beta}} }\big)
\end{align}
 for $\eta=c\left(\frac{C_1c_0p_1}{ 12}\right)^{\frac{\alpha}{\beta}}$, which  completes the proof of \eqref{eqn:exp-lif-upper}.

\end{proof}

\begin{proof}[Proof of Theorem~\ref{thm:infi-lif-upper}]
The proof follows the same approach as for \eqref{eqn:exp-lif-upper}. Under the assumptions of Theorem~\ref{thm:infi-lif-upper}, there exists a sequence $r_i$ satisfying $r_{i+1}<c_3r_i$ and Assumption~\ref{assume:N-ev}. Then, for small $E>0$, since $r_i\nearrow\infty$, there exist $r=r_i<r'=r_{i+1}$ in the sequence such that
\begin{align} \label{eqn:r-choice}
    r^\beta<\frac{c_0p_0C_1}{12E}<r'^\beta. 
\end{align}
  Consequently, 
    \begin{align*}
        r^\beta E\le \frac{c_0p_0C_1}{12 },\ \ {\rm and}\ \ r\ge c_4E^{-\frac{1}{\beta}},\ \ c_4=c_3^{-1}(c_0p_0C_1/12)^{1/\beta}.
    \end{align*}
    
Next, for any $B(x,R)$, use this choice of $r$ in \eqref{eqn:r-choice} to define the cover $U_R=\bigcup_{i\in I}B(x_i,r)\supset B(x,R)$ as in \eqref{eqn:Neumann-bra}. Let $\mathcal{N}(E,\cdot)$ denote the eigenvalue counting function in \eqref{eqn:ev-count}. Then \cite[Lemma A.1, Lemma A.2]{shou2024spectrum} imply
\begin{align*}
   \mathcal{N}(E,H^{B(x,R)})\le  \mathcal{N}(E,H^{U_R,N}) \le \sum_{i\in I}\mathcal{N}\left(\frac{1}{C_1}E,H^{B(x_i,r),N}\right).
\end{align*}
Taking expectations, we obtain  
\begin{align}
   \frac{1}{|B(x,R)|}\E \mathcal{N} (E,H^{B(x,R)} )\le &   \frac{1}{|B(x,R)|}  \sum_{i\in I}\E\mathcal{N}\left(\frac{1}{C_1}E,H^{B(x_i,r),N}\right) \notag  \\
\le &   \frac{|I|}{|B(x,R)|}   \max_{i\in I}\E\mathcal{N}\left(\frac{1}{C_1}E,H^{B(x_i,r),N}\right) \notag \\
\le & c_5 \max_{i\in I}\P\left(E_0(H^{B(x_i,r),N})\le \frac{1}{C_1}E\right),  \label{eqn:N-upp-proof}
\end{align}
where $c_5$ is independent of $R,r$. Note that in the last inequality we used the volume bound \eqref{eqn:vol-control}, the upper bound on $|I|$ in \eqref{eqn:I-bound}, and
\[  \mathcal{N}\left( E/C_1,H^{B(x_i,r),N}\right)\le  |B(x_i,r)| \cdot  \one_{ \{E_0 \le E/C_1 \} } \le c_2 r^\alpha \one_{ \{E_0 \le E/C_1 \} }. \]

Repeating the argument of \eqref{eqn:5.21}–\eqref{eqn:434}, with \eqref{eqn:A10} replaced by \eqref{eqn:N-upp-proof}, we obtain, for any $R>0$ and $E>0$,
\begin{align*} 
\frac{1}{|B(x,R)|}\E \mathcal{N} (E,H^{B(x,R)} )\le &   c_5 \max_{i\in I}\P\left(E_0(H^{B(x_i,r),N})\le \frac{1}{C_1}E\right)\\ 
\le &   c_5 \max_{i\in I} \P \Big( \frac{1}{|B(x_i,r) |}\sum_{x\in B(x_i,r) }r^{  \beta}\wt V(x)\le \frac{2}{C_1}r^{\beta}E \Big) \\
\le& c_5 \max_{i\in I} \P \Big( \frac{1}{|B(x_i,r) |}\sum_{x\in B(x_i,r) }r^{  \beta}\wt V(x)\le \frac{c_0p_1}{6} \Big)\\ 
\le& c_5 \max_{i\in I} e^{-c|B(x_i,r) |} \\
\le& c_5   \exp({-c_6 E^{  -\frac{\alpha}{\beta}}  }),
\end{align*}
for  $c_5,c_6$ depending only on $c_0,p_0,\alpha,\beta$. For fixed $E$, taking $\limsup_{R\to \infty}$ completes the proof of \eqref{eqn:infi-lif-up}. The smallness condition on $E$ ensures that the corresponding $r\propto E^{-1/\beta}$ is sufficiently large so that $\mu_r\ge \frac{c_0p_1}{4}$ in \eqref{eqn:inf-mu}.
\end{proof}

\subsection{(Modified) Dirichlet bracketing and the lower bound}
For the lower bound, one can also obtain a finite-volume probability estimate for the ground-state energy as in \eqref{eqn:exp-lif-upper}. Since such a finite-volume probability lower bound is not required for the Anderson localization result, we omit it and focus on the lower bound for the integrated density of states \eqref{eqn:infi-lif-lower}.

 Let $r_i\nearrow\infty, r_{i+1}\le c_3'r_i$ be given as in Theorem~\ref{thm:infi-lif-lower}. Then, similarly to \eqref{eqn:r-choice}, for any $E>0$, pick consecutive $r'<r$ from the list such that
 \begin{align*} 
    r'^\beta<\frac{2c_0' }{ E}<r^\beta. 
\end{align*}
   Consequently, $r$ satisfies  
    \begin{align}\label{eqn:r-bound-Diri}
      \frac{1}{  r^\beta} \le \frac{E}{2c_0'} ,\ \ {\rm and}\ \ r\le c'_4E^{-\frac{1}{\beta}},\ \ c'_4=c_3' (2c_0 )^{1/\beta}.
    \end{align}

For any $R>4r$, consider the covering of $B(x,R/2)$ given by Lemma~\ref{lem:covering} using balls of radius $4r$, i.e.,
   \begin{align*}
     B(x,R/2)\subset  \bigcup_{i\in \wt I }B(x_i,4r)\subset  B(x,R).  
   \end{align*}
  It follows from Lemma~\ref{lem:covering} that the balls $B(x_i,2r)$ are disjoint and 
     \begin{align}\label{eqn:wtI}
  C'_2 \frac{R^\alpha}{r^\alpha}\le          |\wt I|\le C'_3\frac{R^\alpha}{r^\alpha}. 
        \end{align} 
Let $Q=B(x,R)\setminus \Big(\bigcup_{i\in \wt I}B(x_i,2r)\Big)$. Since $B(x_i,2r)$ are disjoint, 
\begin{align}\label{eqn:parti}
   B(x,R)=Q\cup\bigcup_{i\in \wt I}B(x_i,2r)
\end{align}
is a disjoint partition of $B(x,R)$.  

To bound the quadratic energy of the Dirichlet Laplacian on $B_R(x)=B(x,R)$ from above, we use the modified Dirichlet Laplacian on a subset $A\subset \V$, defined as:
    \begin{align}\label{eqn:Lap-D}
        -\Delta^{A,D}f(x)= -\Delta^{A}f(x)+\deg_{\V\backslash A}(x)\cdot f(x),   \   f\in \ell^2(A), 
    \end{align}
  which is a Dirichlet-type operator with zero/simple boundary conditions on $\V\backslash A$, modified at the interior boundary vertices of $A$ to penalize such boundary vertices.

We consider the modified Dirichlet Laplacian on each $B_{2r}(x_i)=B(x_i,2r)$ and on $Q$. Then
\begin{align*}
    \ipc{f}{-\Delta^{B_R(x)}f}_{\ell^2(B_R(x))}\le \ipc{f}{-\Delta^{Q,D}f}_{\ell^2(Q)}+\sum_{i\in \wt I}\ipc{f}{-\Delta^{B_{2r}(x_i),D}f}_{\ell^2(B_{2r}(x_i))}.
\end{align*}
Adding the potential gives 
\begin{align*}
    \ipc{f}{H^{B_R(x)}f}_{\ell^2(B_R(x))}\le \ipc{f}{H^{Q,D}f}_{\ell^2(Q)}+\sum_{i\in \wt I}\ipc{f}{H^{B_{2r}(x_i),D}f}_{\ell^2(B_{2r}(x_i))},
\end{align*}
where $H^{B_R(x)}=-\Delta^{B_R(x)}+V_\omega$ and $H^{A,D}=-\Delta^{A,D}+V_\omega$ for $A=B(x_i,2r)$ or $Q$. 

It follows from the disjoint partition in \eqref{eqn:parti} and \cite[Lemma A.3]{shou2024spectrum} that 
\begin{align}\label{eqn:456}
    \mathcal{N}(E;H^{B_R(x)})\ge   \mathcal{N}(E;H^{Q,D}) \, +\, \sum_{i\in \wt I} \mathcal{N}(E;H^{B_{2r}(x_i),D})\ge |\wt I|\min_{i\in \wt I} \mathcal{N}(E;H^{B_{2r}(x_i),D}).
\end{align}

Let $E_0=E_0\big(H^{B_{2r}(x_i),D}\big)$ be the ground-state energy of $H^{B_{2r}(x_i),D}$. Given $E>0$, if 
\[E_0\big(H^{B_{2r}(x_i),D}\big)\le E,\]  then $\mathcal N(E; H^{B_{2r}(x_i),D})$ is at least one. Hence,
\begin{align*}  
     \E \mathcal N(E; H^{B_{2r}(x_i),D})\ge \P\Big(E_0(H^{B_{2r}(x_i),D})\le E\Big).
\end{align*}
Combining \eqref{eqn:456} with \eqref{eqn:wtI}, we obtain 
\begin{align}
\frac{1}{|B_R(x)|}\E  \mathcal{N}\Big(E;H^{B_R(x)}\Big) \ge &    \frac{|\wt I|}{|B_R(x)|} \min_{i\in \wt I} \E \mathcal N\Big(E; H^{B_{2r}(x_i),D}\Big)\notag \\
\ge & c_5'r^{-\alpha}\min_{i\in \wt I}\P\Big(E_0(H^{B_{2r}(x_i),D})\le E\Big). \label{eqn:459}
\end{align}
For each $B_r(x_i)\subset B_{2r}(x_i)$, let $\varphi_0$ be the ground state of $-\Delta^{B_r(x_i)}$ on $B_r(x_i)$, with zero extension to $B_{2r}(x_i)$. By the min–max principle,    
\begin{align*}
   E_0(H^{B_{2r}(x_i),D})=\inf_{\varphi\neq 0}\frac{\ipc{\phi}{H^{B_{2r}(x_i),D}\phi}}{\ipc{\phi}{\phi}} \le & \,\inf_{\varphi\neq 0}\frac{\ipc{\phi}{-  \Delta^{B_{2r}(x_i)}\phi}}{\ipc{\phi}{\phi}}+\max_{B_{2r}(x_i)}V_\omega(x) \nonumber\\
    \le & \, \frac{\ipc{\phi_0}{-  \Delta^{B_{2r}(x_i)}\phi_0}}{\ipc{\phi_0}{\phi_0}}+\max_{B_{2r}(x_i)}V_\omega(x) \nonumber\\
       = &\, E_0\big(-\Delta^{B_r(x_i)}\big)+\max_{B_{2r}(x_i)}V_\omega(x)\\
     \le  &\, \frac{c_0'}{r^\beta}+\max_{B_{2r}(x_i)}V_\omega(x)  \\
       \le  &\, \frac{E}{2}+\max_{B_{2r}(x_i)}V_\omega(x)  . 
\end{align*}
In the last two lines, we used the upper bound \eqref{eqn:D-ev-upper} for the Dirichlet eigenvalue and the choice of \(E\) and \(r\) in \eqref{eqn:r-bound-Diri}.

Hence, for sufficiently small $E$,
\begin{align}
  \P\Big(E_0(H^{B_{2r}(x_i),D})\le E\Big)\ge \P\Big(\max_{B_{2r}(x_i)}V_\omega(x)\le \frac{E}{2}\Big) \ge&\;  \P\Big({\textrm{For all} } \ x\in B_{2r}(x_i), V_\omega(x)\le \frac{E}{2}\Big)\nonumber \\
 \ge &\; C\big(\frac{E}{2}\Big)^{\kappa | B_{2r}(x_i)|} \notag \\
  \ge &\; Ce^{-c_6 |\log E| E^{-\frac{\alpha}{\beta}}}. \label{eqn:PE0-lower}
\end{align}
In the last three lines we use the distributional assumption $\P\big(V(x)\le E\big)\ge CE^{\kappa}$ from \eqref{eqn:V-Lif-ass} together with the upper volume bound
\[| B_{2r}(x_i)|\le  c_2r^\alpha=c_2(c_4')^\alpha E^{-\frac{\alpha}{\beta}}  \]
from \eqref{eqn:vol-control} and \eqref{eqn:r-bound-Diri}. We also use the elementary inequality $2\log(E)<\log(E/2)<0$ for small $E$. Combining \eqref{eqn:PE0-lower} with \eqref{eqn:459}, we obtain
\begin{align*}
\frac{1}{|B_R(x)|}\E  \mathcal{N}(E;H^{B_R(x)}) \ge   c_5'r^{-\alpha} Ce^{-c_6 |\log E|   E^{-\frac{\alpha}{\beta}}} 
\ge   c_6' e^{-c_7 |\log E | E^{-\frac{\alpha}{\beta}}}.
\end{align*}
This proves \eqref{eqn:infi-lif-lower}. 

As in the upper bound, for fixed $E$, taking the limit inferior as $R\to \infty$ and then the double logarithmic limit as $E\searrow 0$ produces the log–log lower bound in \eqref{eqn:lif-log-log}
        \[ \lim_{E\searrow 0} \varliminf_{R\to \infty}  \frac{1}{\log E}\log \Big|\log \frac{\E  \mathcal{N}(E;H^{B_R(x)})}{|B_R(x)|}\Big|\ge -\frac{\alpha}{\beta}. \]


\appendix

\section{Combes--Thomas estimates, A priori bounds, and Decoupling inequalities }\label{app:pB-CT}

\begin{lemma}[Combes--Thomas estimates]\label{lem:CT}
Let $(\V,\mathcal E)$ be a graph with a metric $d$ and satisfying the bounded geometry condition \eqref{eqn:bded-geo} with a constant $M>0$. Let $H_\omega=-\Delta+V_\omega$ be as in \eqref{eqn:AM} acting on $\ell^2(\V)$. For any $x\in\V$ and $R>0$, let $H^{B_R}$ be the restriction of $H_\omega$ on a finite ball $B_R=B(x,R)$. For $z\in \C$, let $\mu=\dist\big(z,\sigma(H^{B_R})\big)$. If $0<\mu \le 2M$, then the Green's function of $H^{B_R}$ satisfies 
 \begin{align}\label{eqn:CT-est}
      |G^{B_R}(x,y;z)|\le \frac{2}{\mu} e^{-\frac{\log 2}{2M}\, \mu \, d(x,y)}.
 \end{align}
\end{lemma}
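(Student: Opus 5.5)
We prove the Combes--Thomas estimate \eqref{eqn:CT-est} by the standard exponential conjugation argument, adapted to graphs of bounded degree $M$. Fix $x\in B_R$. For a parameter $\nu>0$ to be chosen, introduce the bounded weight $w(y)=e^{\nu d(x,y)}$ on $B_R$; it is bounded because $B_R$ is finite. Let $W$ be multiplication by $w$. Since $V_\omega$ is diagonal, it commutes with $W$, and
\begin{align*}
W H^{B_R} W^{-1} = H^{B_R} + K_\nu,\qquad K_\nu(u,v)= -\big(e^{\nu(d(x,u)-d(x,v))}-1\big)\one_{u\sim v}.
\end{align*}
Because $|d(x,u)-d(x,v)|\le 1$ for $u\sim v$, every nonzero entry satisfies $|K_\nu(u,v)|\le e^{\nu}-1$. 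Each row and each column has at most $M$ nonzero entries, so the Schur test gives $\|K_\nu\|\le M(e^\nu-1)$.

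Next we invert. Since $H^{B_R}$ is self-adjoint, $\|(H^{B_R}-z)^{-1}\|=1/\mu$. We write
\begin{align*}
W(H^{B_R}-z)^{-1}W^{-1}=\big(H^{B_R}+K_\nu-z\big)^{-1}=(H^{B_R}-z)^{-1}\big(1+K_\nu(H^{B_R}-z)^{-1}\big)^{-1}.
\end{align*}
If $M(e^\nu-1)\le \mu/2$, the Neumann series converges and $\|W(H^{B_R}-z)^{-1}W^{-1}\|\le 2/\mu$. Taking the $(x,y)$ matrix element gives $w(x)G^{B_R}(x,y;z)w(y)^{-1}$, so
\begin{align*}
|G^{B_R}(x,y;z)|\le \frac{2}{\mu}e^{-\nu d(x,y)},
\end{align*}
using $w(x)=1$.

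The remaining step, which is the only mildly delicate point, is to choose $\nu=\frac{\log 2}{2M}\mu$ and verify $M(e^\nu-1)\le\mu/2$. The hypothesis $\mu\le 2M$ gives $\nu\le\log 2$. On $[0,\log2]$ the function $e^t$ is convex, so it lies below its chord, giving $e^\nu-1\le \nu/\log 2$. Therefore
\begin{align*}
M(e^\nu-1)\le \frac{M\nu}{\log 2}=\frac{\mu}{2},
\end{align*}
which is exactly the required condition. This yields \eqref{eqn:CT-est} with the stated constant $\frac{\log 2}{2M}$. We do not expect any real obstacle: the argument needs only bounded degree and the $1$-Lipschitz property of the graph distance, so the volume condition \eqref{eqn:vol-control} plays no role here.
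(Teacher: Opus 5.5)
Your route is essentially the paper's. The paper does not prove the estimate from scratch: it cites the general Combes--Thomas bound $|G^{B_R}(x,y;z)|\le (\mu-S_\alpha)^{-1}e^{-\alpha d(x,y)}$, of which your conjugation/Schur/Neumann-series argument is the standard proof. It then bounds $S_\alpha\le M(e^\alpha-1)$, takes $\alpha=\log(1+\mu/(2M))$, and uses $\log(1+t)\ge(\log 2)t$ on $[0,1]$. Your chord inequality $e^\nu-1\le\nu/\log 2$ on $[0,\log 2]$ is the same fact read in the inverse direction. Your Schur bound $\|K_\nu\|\le M(e^\nu-1)$, the parameter choice, and the estimate $\|(H^{B_R}+K_\nu-z)^{-1}\|\le 2/\mu$ are all correct.

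The final read-off step, however, has a sign error: as written it gives growth, not decay. Under your convention $(WAW^{-1})(u,v)=w(u)A(u,v)w(v)^{-1}$ (the one for which your formula for $K_\nu$ is right), with $w(u)=e^{\nu d(x,u)}$ and $w(x)=1$, the $(x,y)$ entry is $e^{-\nu d(x,y)}G^{B_R}(x,y;z)$. So $\|WG^{B_R}W^{-1}\|\le 2/\mu$ only yields $|G^{B_R}(x,y;z)|\le\frac{2}{\mu}e^{+\nu d(x,y)}$.

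The fix is immediate: conjugate the other way, i.e.\ replace $\nu$ by $-\nu$, or equivalently center the weight at $y$. The new perturbation $K_{-\nu}$ has nonzero entries of modulus $|e^{\pm\nu}-1|\le e^{\nu}-1$, so the same bound $\|K_{-\nu}\|\le M(e^\nu-1)\le\mu/2$ holds. Now $(W^{-1}G^{B_R}W)(x,y)=e^{\nu d(x,y)}G^{B_R}(x,y;z)$, which gives \eqref{eqn:CT-est}. Alternatively, keep your conjugation and read off the $(y,x)$ entry, using $G^{B_R}(y,x;z)=G^{B_R}(x,y;z)$, which holds because $H^{B_R}$ is a real symmetric matrix.
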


\begin{remark}    
This follows from a more general Combes--Thomas estimate in \cite[Theorem 10.5]{aizenman2015random}: 
\begin{align*}
      |G^{B_R}(x,y;z)|\le \frac{1}{\mu-S_\alpha} e^{-  \alpha \, d(x,y)}. 
 \end{align*}
for any $S_\alpha:=\sup_x\sum_y|H(x,y)|(e^{\alpha d(x,y)}-1)<\mu$. 
For the Schr\"odinger operator $H_\omega$ in \eqref{eqn:AM},
we can evaluate $S_\alpha\le M(e^\alpha-1)$ for any $\alpha>0$. 
The condition $S_\alpha<\mu$ is satisfied by taking
\begin{align*}
\alpha&= \log\left(\frac{\mu}{2M}+1\right),
\end{align*}
which ensures $S_\alpha\le\frac{1}{2}\mu$.
Using that $\log(1+x)\ge (\log 2)x$ for $0\le x\le1$, we obtain that for $0<\mu\le 2M$,
\[\alpha\ge \frac{\log 2}{2M}\, \mu. \]
Thus, 
\begin{align*}
|G^{B_R}(x,y;z)|&\le\frac{2}{\mu}\exp\left[-\frac{\log 2}{2M}\, \mu\,  d(x,y)\right].
\end{align*}
 
\end{remark}

\begin{lemma}[A priori bound on fractional moments]\label{lem:priori-G}
For any graph $(\V,\mathcal E)$, let $H_\omega=-\Delta+V_\omega$ be as in \eqref{eqn:AM} acting on $\ell^2(\V)$.
Assume that the common probability distribution $P_0$ of $V_\omega$ satisfies Assumption~\ref{assume:holder} with some $\tau\in (0,1]$ and $\kappa_\tau\in(0,\infty)$. For any $s\in(0,\tau)$, there exists a constant $C>0$, depending only on $s$, $\|V_\omega\|_\infty$, $\tau$, and $\kappa_\tau$, such that for any subset $\Lambda\subset \V$, any $x,y\in \Lambda$, and any $z\in \C\backslash \R$, the Green's function of $H_\omega$ satisfies
\begin{align}\label{eqn:priori-xy}
     \E_{x,y} |G^\Lambda(x,y;z)|^s \le C,
\end{align}
where $\E_{x,y}(\cdot)=\E(\cdot \, |\, \{V_\omega(u)\}_{u\in \Lambda\backslash \{x,y\}} )$ denotes the conditional expectation with $\{V_\omega(u)\}_{u\in \Lambda\backslash \{x,y\}}$ fixed. As a consequence,    \begin{align}\label{eqn:priori-all}
       \E |G^\Lambda(x,y;z)|^s \le C.
   \end{align}
\end{lemma}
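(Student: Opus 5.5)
The a priori bound rests on the rank-two perturbation structure of the Green's function in the two variables $V_\omega(x)$ and $V_\omega(y)$, together with the $\tau$-H\"older regularity \eqref{eqn:tau-holder}. The route is the standard one from \cite[Chapter 8]{aizenman2015random}. Fix all potentials except those at $x,y$, and write $v_x=V_\omega(x)$, $v_y=V_\omega(y)$.

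First, Krein's formula (equivalently, a Schur complement of $H^\Lambda-z$ onto $\ell^2(\{x,y\})$) gives a $2\times 2$ matrix identity
\begin{align*}
\begin{pmatrix} G^\Lambda(x,x) & G^\Lambda(x,y)\\ G^\Lambda(y,x) & G^\Lambda(y,y)\end{pmatrix}
=\big(D - A\big)^{-1},\qquad D=\mathrm{diag}(v_x,v_y).
\end{align*}
Here $A$ is a $2\times 2$ matrix with nonnegative-definite imaginary part when $\operatorname{Im} z>0$. Crucially, $A$ depends on $z$ and on the other potentials, but not on $v_x,v_y$.

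For $x=y$ the identity reduces to the rank-one formula $G^\Lambda(x,x)=(v_x-a)^{-1}$ with $a\in\C$ independent of $v_x$. The layer-cake bound then gives
\begin{align*}
\E_x|v_x-a|^{-s}\le \int_0^\infty P_0\big(|v-\operatorname{Re}a|\le t^{-1/s}\big)\,dt\le C(s,\tau,\kappa_\tau),
\end{align*}
which is finite for $s<\tau$. The region of small $t$ is controlled by $P_0\le 1$, and large $t$ by $\kappa_\tau t^{-\tau/s}$.

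For $x\neq y$, the off-diagonal entry of $(D-A)^{-1}$ has the form $\frac{a_{12}}{(v_x-a_{11})(v_y-a_{22})-a_{12}a_{21}}$. The main obstacle is bounding the $s$-moment of this ratio uniformly in the parameters $a_{ij}$. I would use the standard change of variables to the mean and difference of $v_x,v_y$, or alternatively the cited lemma \cite[Theorem 8.3]{aizenman2015random}. This lemma states that for a dissipative $2\times 2$ matrix $A$ and H\"older continuous independent $v_x,v_y$, we have $\E_{x,y}\|(D-A)^{-1}\|^s\le C$ for all $s<\tau$. Its proof rests on a weak-$L^1$ bound: the measure of $\{(v_x,v_y):\|(D-A)^{-1}\|>t\}$ is at most $C/t$ in a suitable averaged sense. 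Combined with the H\"older continuity, this converts via Hölder/layer cake to an $s$-moment bound with $s<\tau$.

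In the H\"older (non-absolutely-continuous) case one proceeds as follows. First integrate $v_x$ conditionally: $G^\Lambda(x,y)$ is a M\"obius function of $v_x$, namely $G^\Lambda(x,y)=\frac{\beta}{v_x-\gamma}$, where $\beta,\gamma$ depend on $v_y$. The rank-one step bounds $\E_{x}|G|^s\le C|\beta|^s$. One then checks that $\beta$ is itself a diagonal-type Green's function entry in $v_y$, specifically $\beta=-G^{\Lambda\setminus\{x\}}(y,y)\cdot(\text{hopping})$, up to bounded factors using bounded degree $M$. Applying the rank-one step again in $v_y$ yields \eqref{eqn:priori-xy}.

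Finally, taking the full expectation of the conditional bound gives \eqref{eqn:priori-all}. The constants depend only on $s,\tau,\kappa_\tau$ and, through $M$, on the graph.
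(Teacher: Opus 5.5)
Your Schur-complement setup and the diagonal case are fine, and invoking \cite[Theorem 8.3, Corollary 8.4]{aizenman2015random} is exactly what the paper does; it gives no argument beyond that citation. The gap is in the self-contained route you give for the H\"older case, which is the only case the lemma concerns. After integrating out $v_x$ you get $\E_x|G^\Lambda(x,y;z)|^s\le C|\beta|^s$, and the factorization $\beta=a_{12}\,G^{\Lambda\setminus\{x\}}(y,y)$ is correct. But $a_{12}$ is not a hopping amplitude. It is the effective $x$--$y$ coupling of the Schur complement, $a_{12}=\one_{x\sim y}+\sum_{w\sim x,\,u\sim y}G^{\Lambda\setminus\{x,y\}}(w,u;z)$ with $w,u\in\Lambda\setminus\{x,y\}$. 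This quantity is independent of $v_x,v_y$, but it is bounded only by $1+M^2/|\mathrm{Im}\,z|$ and can be arbitrarily large as $\mathrm{Im}\,z\to0$. For $d(x,y)\ge2$ it consists of the Green's function term alone. Iterating the rank-one bound therefore gives only $\E_{x,y}|G^\Lambda(x,y;z)|^s\le C^2|a_{12}|^s$, which is not uniform in $z$ or in the frozen potentials. That uniformity is the entire content of \eqref{eqn:priori-xy}, and it is needed uniformly in $\eps>0$ in Lemma~\ref{lem:initial} and in the decoupling lemmas.

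What a one-variable-at-a-time estimate cannot see is that a large $|a_{12}|$ is compensated jointly: $G^\Lambda(x,y)=a_{12}/\big((v_x-a_{11})(v_y-a_{22})-a_{12}^2\big)$ is of size $1/|a_{12}|$ off a thin set in the $(v_x,v_y)$-plane. Your fallbacks do not cover this either. The mean/difference change of variables is tailored to a bounded density. For a possibly singular $\tau$-H\"older $P_0$, a Lebesgue weak-$L^1$ bound does not turn into a $P_0\otimes P_0$ bound by ``H\"older/layer cake''.

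If you want a self-contained proof, argue jointly in $(v_x,v_y)$. Set $p=v_x-a_{11}$, $q=v_y-a_{22}$, and $b=a_{12}=a_{21}$; the block is complex symmetric because $H^\Lambda$ is real symmetric. Then $G^\Lambda(x,x)=q/(pq-b^2)$, $G^\Lambda(y,y)=p/(pq-b^2)$ and $G^\Lambda(x,y)=b/(pq-b^2)$, so $|G^\Lambda(x,y)|^2=|G^\Lambda(x,x)G^\Lambda(y,y)|\,|b|^2/|pq|$.

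On $\{|b|^2\le2|pq|\}$, bound $|G^\Lambda(x,y)|^s\le2^{s/2}|G^\Lambda(x,x)|^{s/2}|G^\Lambda(y,y)|^{s/2}$. Then apply Cauchy--Schwarz together with your rank-one bound in $v_x$ (resp.\ $v_y$) for each diagonal entry.

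On $\{|b|^2>2|pq|\}$ one has $|pq-b^2|\ge|b|^2/2$, so $|G^\Lambda(x,y)|\le2/|b|$. Using $|p|\ge|v_x-\mathrm{Re}\,a_{11}|$, $|q|\ge|v_y-\mathrm{Re}\,a_{22}|$ and \eqref{eqn:tau-holder}, one gets $P_0\otimes P_0(|pq|<|b|^2/2)\le C\min(1,|b|^{\tau})$. This region therefore contributes at most $C\,2^s|b|^{-s}\min(1,|b|^\tau)\le C'$, since $s<\tau$. The resulting constant depends only on $s,\tau,\kappa_\tau$, not on $M$; the $M$-dependence in your proposal is an artifact of the flawed step.
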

\begin{remark}
Such a priori bounds are closely related to Wegner’s bound \cite{wegner} on the density of states. Fractal bounds of the form \eqref{eqn:priori-all}, especially in the off-diagonal case, were first proved in \cite{AizenmanMolchanov}; see also \cite{graf94,aizenman01finitevol,aizenman2015random} for further developments and discussion. Here, we use the modern version found in \cite[Theorem 8.3, Corollary 8.4]{aizenman2015random}, whose proof employs a modified argument from \cite{graf94}. The dependence of the constant $C$ on the parameters is explicit: $C=\frac{\tau}{\tau-s}\frac{(4\kappa_\tau)^{s/\tau}}{\|V_\omega\|_\infty^s}$, as shown in the cited references.

Finally, the a priori bound for complex energy extends to Lebesgue-almost every $E\in\R$. Moreover, if $\Lambda$ is finite, the extension holds for all real energies $E\in\R$; see \cite[Theorem 8.3, Corollary 8.4]{aizenman2015random} and the discussion thereafter.
\end{remark}

Decoupling expectation values of products of Green’s functions is a key step in the fractional moments method for proving localization. Because the Green’s function is a fractional linear function of individual potential values, suitable regularity conditions on the distribution of $V_\omega$ permit the decoupling of conditional expectations of such products. These techniques are systematically presented in \cite[\S 8.4]{aizenman2015random} for a broad class of random graph operators. Below, we restate relevant results for the Anderson model \eqref{eqn:AM} considered in this work.

\begin{lemma}[{\cite[Corollary 8.11 ]{aizenman2015random}}]
Let $H_\omega$ be a random operator of the form \eqref{eqn:AM} with an i.i.d. random potential whose single-site distribution $P_0$ is compactly supported and satisfies \eqref{eqn:tau-holder} for some $\tau \in (0,1]$. Then, for any $s \in (0,\tau/2)$, there exists $D_s > 0$ such that for any $\Lambda,\Lambda' \subset \V$, and all 
$x,y \in \Lambda$ and $x',y' \in \Lambda'$, 
the conditional averages of the corresponding restricted Green's functions, averaged over a single potential variable $V_\omega(x)$, satisfy:

\begin{align}\label{eqn:decoup-AW8.11}
    \E_{x}\Big[   \big|G^{\Lambda}(x,y;z) \big|^s\big|G^{\Lambda'}(x',y';z) \big|^s     \Big]\le D_s\,    \E_{x}\Big[ \, \big|G^{\Lambda}(x,y;z) \big|^s    \Big] \,    \E_{x}\Big[    \big|G^{\Lambda'}(x',y';z) \big|^s     \Big].
\end{align}
    
\end{lemma}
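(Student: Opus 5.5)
The plan is to reduce the statement to a one-dimensional integral inequality for the single-site measure $P_0$. The reduction uses the fact that restricted Green's functions depend on a single potential value through a fractional linear function.

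\emph{Step 1 (rank-one structure).} Fix all potential values except $v=V_\omega(x)$. If $x\notin\Lambda'$, then $G^{\Lambda'}(x',y';z)$ does not depend on $v$. It factors out of $\E_x$, and \eqref{eqn:decoup-AW8.11} holds with $D_s=1$. So assume $x\in\Lambda\cap\Lambda'$. The rank-one perturbation (Krein) formula, applied with the potential at $x$ set to $0$ as reference, gives the form of both factors.

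For the first factor, since $x\in\Lambda$ is one of its own arguments,
\[
G^{\Lambda}(x,y;z)=\frac{A}{v-B},
\]
where $A$ and $B$ (with $\operatorname{Im}B\neq 0$) do not depend on $v$. Here $B=-1/\hat G(x,x)$, and $\hat G$ denotes $G^\Lambda$ with $v=0$.

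For the second factor, the same formula gives a Möbius function of $v$:
\[
G^{\Lambda'}(x',y';z)=\frac{av+b}{v-B'}=a+\frac{c}{v-B'}.
\]
Again $a,b,c,B'$ do not depend on $v$.

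\emph{Step 2 (one-dimensional inequalities).} Because $P_0$ is compactly supported and $\tau$-H\"older continuous, I will prove two bounds, uniformly in $\eta,\eta_1,\eta_2\in\C$.
\begin{align*}
\int \frac{dP_0(v)}{|v-\eta_1|^{s}|v-\eta_2|^{s}}&\le D\int\frac{dP_0(v)}{|v-\eta_1|^{s}}\int\frac{dP_0(v)}{|v-\eta_2|^{s}},\qquad 2s<\tau,\\
\int\frac{dP_0(v)}{|v-\eta|^{s}}&\ge c\,\big(1+|\eta|\big)^{-s}.
\end{align*}

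The first bound is the heart of the matter. I would split according to whether $|\eta_1-\eta_2|$ is large or small compared with the distances to the (bounded) support. If the poles are separated, then near each pole the other factor is essentially constant, and that constant is comparable to the other integral. If the poles are close, the integrand is at most $|v-\eta|^{-2s}$ for a common $\eta$. That singularity is integrable exactly because $2s<\tau$, using the layer-cake bound $P_0(|v-\eta|<t)\le\kappa_\tau t^\tau$. The right-hand side is then bounded below by the second bound, which follows from compact support.

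\emph{Step 3 (assembling).} Expand $|a+c/(v-B')|^s\le|a|^s+|c|^s|v-B'|^{-s}$. The left side of \eqref{eqn:decoup-AW8.11} is then at most $|A|^s$ times two terms:
\begin{itemize}
\item $|a|^s\int|v-B|^{-s}\,dP_0$;
\item $|c|^s\int|v-B|^{-s}|v-B'|^{-s}\,dP_0$.
\end{itemize}
Step 2 bounds the second term by $D\,\E_x|G^\Lambda|^s\,|c|^s\int|v-B'|^{-s}\,dP_0$.

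It remains to compare $|a|^s+|c|^s\int|v-B'|^{-s}\,dP_0$ with $\E_x|G^{\Lambda'}|^s$. This requires a reverse inequality of the form
\[
\int|a+c/(v-B')|^s\,dP_0\gtrsim |a|^s+|c|^s\int|v-B'|^{-s}\,dP_0 .
\]
This cancellation issue is the main obstacle. The function $a+c/(v-B')$ can vanish at $v=B'-c/a$, and the question is whether the integral is then still comparable to the sum of the two terms.

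My first attempt would be to write the function as $a\,(v-\zeta)/(v-B')$ and use H\"older continuity again. The set where $|v-\zeta|$ is small carries little $P_0$-mass, so the integral cannot be much smaller than the sum of the two terms. This should give a constant depending only on $s,\tau,\kappa_\tau$ and the diameter of $\operatorname{supp}P_0$, which would complete the proof.
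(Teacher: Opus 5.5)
The paper does not prove this lemma; it only cites \cite[Corollary 8.11]{aizenman2015random}. Your Steps 1 and 2 are fine. Step 3 has a genuine gap, and not the one you anticipate. The reverse inequality
\[
\int\Big|a+\frac{c}{v-B'}\Big|^s dP_0(v)\gtrsim |a|^s+|c|^s\int\frac{dP_0(v)}{|v-B'|^s}
\]
is false, already for actual Green's functions (recall $D_s$ must be uniform in $z$).

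Take $\Lambda'=\{x,x'\}$ with $x\sim x'$ and $y'=x'$, and set $w=\deg(x')+V_\omega(x')-z$ and $\zeta=z-\deg(x)$. Then
\[
G^{\Lambda'}(x',x';z)=\frac{v-\zeta}{w(v-\zeta)-1}.
\]
Choose $E=\deg(x')+V_\omega(x')$, so that $w=-i\varepsilon$. For small $\varepsilon$ you get $|G^{\Lambda'}(x',x';z)|\le 2|v-\zeta|$ on $\supp P_0$, hence $\E_x|G^{\Lambda'}(x',x';z)|^s=O(1)$.

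In your notation, however, $a=\lim_{v\to\infty}G^{\Lambda'}=1/w$, so $|a|^s=\varepsilon^{-s}$, and $|B'|\sim\varepsilon^{-1}$. The first term of your bound alone, $|a|^s\,\E_x|G^{\Lambda}|^s$, already exceeds the target $D_s\,\E_x|G^\Lambda|^s\,\E_x|G^{\Lambda'}|^s$ by a factor of order $\varepsilon^{-s}$.

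The cancellation is global, not localized near the zero $\zeta$. When the pole is far from $\supp P_0$ and the zero is near it, $|a(v-\zeta)/(v-B')|\approx|a||v-\zeta|/|B'|$ uniformly on the support. Meanwhile $|a|^s$ and $|c|^s\int|v-B'|^{-s}dP_0$ are both of order $|a|^s$. The loss happens at the triangle inequality $|a+c/(v-B')|^s\le|a|^s+|c|^s|v-B'|^{-s}$, which separates the zero from the pole. H\"older continuity near $\zeta$ cannot recover it.

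The fix is never to split a Green's function. Both $f_1=G^\Lambda(x,y;z)$ and $f_2=G^{\Lambda'}(x',y';z)$ are fractional linear in $v$, so Cauchy--Schwarz in $v$ gives
\[
\E_x|f_1f_2|^s\le\big(\E_x|f_1|^{2s}\big)^{1/2}\big(\E_x|f_2|^{2s}\big)^{1/2}.
\]
It then suffices to prove a reverse H\"older bound $\int|f|^{2s}dP_0\le C\big(\int|f|^s dP_0\big)^2$, uniformly over all fractional linear $f$: constants, $b/(v-\eta)$, and $a(v-\zeta)/(v-\eta)$. Your near/far case analysis from Step 2 does this:
\begin{itemize}
\item A zero or pole at distance at least $1$ from $\supp P_0$ makes its factor constant on the support, up to a bounded ratio.
\item A pole within distance $1$ satisfies $\int|v-\eta|^{-2s}dP_0\le C$ by the layer-cake bound. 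This is exactly where $2s<\tau$ is needed.
\item The lower bounds on $\int|f|^s dP_0$ follow from compact support together with $P_0(|v-\mathrm{Re}\,\zeta|\le t)\le\kappa_\tau t^\tau$.
\end{itemize}
This is the natural source of the restriction $s\in(0,\tau/2)$ in the statement. Because it always keeps each zero together with its pole, the example above causes no difficulty.
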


Applying the decoupling inequality \eqref{eqn:decoup-AW8.11} twice yields the following decoupling lemma.

\begin{lemma}[{\cite[Appendix C]{aizenman01finitevol}}]\label{lem:GS-decoup}
Let $H_\omega$ be a random operator of the form \eqref{eqn:AM} with an i.i.d. random potential whose single-site distribution $P_0$ is compactly supported and satisfies \eqref{eqn:tau-holder} for some $\tau \in (0,1]$. Then, for any $s \in (0,\tau/2)$, there is a constant $C=C(s,\tau,\lambda)$ such that for any $W\subset \V$,  $u,v\in W^c$ and $v',y\in \V$, one has 
 \begin{align}\label{eqn:GS-decoup}
      \E \Big[
  \big| G^{W^c}(u,v;z) \big|^s\, \cdot \, 
  \big| G(v',y;z) \big|^s \Big] \le  C \,  \E \Big[
  \big| G(v',y;z) \big|^s \Big]   .
 \end{align}
\end{lemma}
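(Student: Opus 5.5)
The bound will come from two applications of the single-site decoupling inequality \eqref{eqn:decoup-AW8.11}, followed by the a priori bound of Lemma~\ref{lem:priori-G}. The difficulty is that $G^{W^c}(u,v;z)$ and $G(v',y;z)$ share all the random variables in $W^c$. So we cannot factor the expectation by independence, as was done in the proof of Lemma~\ref{lem:ini-induc}. Instead we decouple through conditional averages over the two sites $u$ and $v$, which are exactly the sites where $G^{W^c}(u,v)$ is controlled by \eqref{eqn:priori-xy}.

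\textbf{Step 1: decouple over $V_\omega(u)$.} Fix all potential values except $V_\omega(u)$. Apply \eqref{eqn:decoup-AW8.11} with $\Lambda=W^c$, $x=u$, $y=v$, and $\Lambda'=\V$, $(x',y')=(v',y)$:
\begin{align*}
\E_u\big[|G^{W^c}(u,v)|^s|G(v',y)|^s\big]\le D_s\,\E_u|G^{W^c}(u,v)|^s\,\E_u|G(v',y)|^s .
\end{align*}
The cited form of \eqref{eqn:decoup-AW8.11} allows the averaged site to be the first index of one factor only. Since $G^{W^c}(u,v)$ has $u$ as its first index, this is admissible here.

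\textbf{Step 2: decouple over $V_\omega(v)$.} Now average over $V_\omega(v)$. We need to decouple
\begin{align*}
\E_v\big[\E_u|G^{W^c}(u,v)|^s\cdot\E_u|G(v',y)|^s\big].
\end{align*}
The two factors are averages over $V_\omega(u)$ of functions that are fractional linear in $V_\omega(v)$. Two routes are available. One is to use the symmetry $G^{W^c}(u,v)=G^{W^c}(v,u)$ and rerun the argument behind \eqref{eqn:decoup-AW8.11} inside the $u$-integral, using Fubini and Hölder in $u$. The other is to invoke the two-site version of \cite[\S8.4]{aizenman2015random}, which is what \cite[Appendix C]{aizenman01finitevol} does. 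Either way we obtain, with constant $D_s^2$,
\begin{align*}
\E_{u,v}\big[|G^{W^c}(u,v)|^s|G(v',y)|^s\big]\le D_s^2\,\E_{u,v}|G^{W^c}(u,v)|^s\,\E_{u,v}|G(v',y)|^s .
\end{align*}
Here the restriction $s<\tau/2$ is used, so that the doubled exponent $2s$ appearing through Cauchy--Schwarz stays below $\tau$.

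\textbf{Step 3: conclude.} By \eqref{eqn:priori-xy} applied with $\Lambda=W^c$ and the two sites $u,v\in W^c$, we have $\E_{u,v}|G^{W^c}(u,v)|^s\le C$, uniformly in the remaining variables and in $z\notin\R$. Taking the full expectation then gives \eqref{eqn:GS-decoup} with constant $CD_s^2$. If $u=v$, a single decoupling (Step 1) followed by the diagonal a priori bound suffices.

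\textbf{Main obstacle.} The hard part is Step 2: justifying the second decoupling after the first average has already been taken. One must check that the averaged quantity still satisfies the hypotheses of the single-site decoupling estimate, namely that it is built from Green's functions depending fractional-linearly on $V_\omega(v)$. One must also check that compact support of $P_0$ gives the uniform constants, and this is where the dependence of $C$ on $\|V_\omega\|_\infty$ enters.
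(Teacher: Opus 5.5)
Your skeleton matches the paper's proof. You use two single-site decouplings via \eqref{eqn:decoup-AW8.11}, one over $V_\omega(u)$ and one over $V_\omega(v)$; the paper does $v$ first, which is immaterial. Then you apply the a priori bound \eqref{eqn:priori-xy} to $\E_{u,v}|G^{W^c}(u,v;z)|^s$. Steps 1 and 3 are fine. The gap is Step 2, which you flag as the main obstacle but then only assert. Write $G_1=G^{W^c}(u,v;z)$ and $G_2=G(v',y;z)$. After Step 1 you must control $\E_v\big[\E_u|G_1|^s\,\E_u|G_2|^s\big]$. A product of two $u$-averages is not of the form $|G^{\Lambda}|^s|G^{\Lambda'}|^s$, so \eqref{eqn:decoup-AW8.11} does not apply to it. Neither of your routes closes this. ``Fubini and H\"older in $u$'' does not turn the product of averages back into a product of Green's functions; H\"older would only raise exponents. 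Invoking a ``two-site version'' from \cite{aizenman2015random} or \cite[Appendix C]{aizenman01finitevol} simply cites the statement being proved.

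The missing idea is an independent copy of the variable already integrated out. With the other variables frozen, write $\E_u|G_2|^s=\E_{\wt u}|\wt G_2|^s$, where $\wt G_2$ is $G(v',y;z)$ computed with $V_\omega(u)$ replaced by an independent copy $\wt V_\omega(u)$. Then $\E_u|G_1|^s\,\E_u|G_2|^s=\E_{u,\wt u}\big[|G_1|^s|\wt G_2|^s\big]$, and by Tonelli
\[
\E_{u,v}\big[|G_1|^s|G_2|^s\big]\le D_s\,\E_{u,\wt u}\,\E_v\big[|G_1|^s|\wt G_2|^s\big].
\]
Fix a pair $(V_\omega(u),\wt V_\omega(u))$. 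Then $G_1=G^{W^c}(v,u;z)$ (by symmetry) and $\wt G_2$ are genuine Green's functions, each depending fractional-linearly on $V_\omega(v)$. They come from configurations that differ only at the frozen site $u$. This is harmless, because the bound behind \eqref{eqn:decoup-AW8.11} uses only the fractional-linear dependence on the averaged variable, with constants uniform in the frozen ones. So \eqref{eqn:decoup-AW8.11} over $V_\omega(v)$ applies pointwise. Introducing a second copy $\wt V_\omega(v)$ and regrouping gives $D_s^2\,\E_{u,v}|G_1|^s\,\E_{u,v}|G_2|^s$, which is exactly your claimed Step 2 bound. This is the paper's argument with the roles of $u$ and $v$ swapped. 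No additional Cauchy--Schwarz enters at this stage; the restriction $s<\tau/2$ is simply inherited from the hypothesis of \eqref{eqn:decoup-AW8.11}.
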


\begin{proof}[Proof of Lemma~\ref{lem:GS-decoup}]
For $u,v\in W^c$ and $v',y\in W\subset \V$,  let $G_1=G^{W^c}(u,v;z)$ and $G_2=G^{\V}(v',y;z)$. Applying \eqref{eqn:decoup-AW8.11} to $G_1$ and $G_2$ with the expectation over   $V_\omega(v)$ gives 
\begin{align*}
    \E_{v}\big[|G_1|^s\, |G_2|^s\big]\le D_s\E_{v}\big[|G_1|^s \big]\E_{v}\big[  |G_2|^s\big].
\end{align*}
Next, we rewrite $\E_{v}\big[|G_2|^s\big] = \E_{\wt v}\big[|G_2|^s\big]$ as an integral over an independent copy $\wt V_\omega(v)$ of $V_\omega(v)$. Then
\begin{align*}
   \E_{v}\big[|G_1|^s\, |G_2|^s\big]\le    D_s\E_{v}\big[|G_1|^s \big]\E_{\wt v}\big[  |G_2|^s\big] =
      D_s\E_{v,\wt v}\big[|G_1|^s\, |G_2|^s\big] .
\end{align*}
We now take the expectation over $V_\omega(u)$, which implies
\begin{align*}
   \E_u\Big[  \E_{v}\big[|G_1|^s\, |G_2|^s\big] \Big]\le     
    D_s \E_{v,\wt v}\Big[\, \E_u\big[|G_1|^s\, |G_2|^s\big] \, \Big].
\end{align*}
Applying \eqref{eqn:decoup-AW8.11} to $G_1$ and $G_2$ with the expectation over $V_\omega(u)$ gives
\begin{align*}
     \E_u\Big[  \E_{v}\big[|G_1|^s\, |G_2|^s\big] \Big]&\le     
    D_s^2 \E_{v,\wt v}\Big[\,  \E_{u}\big[|G_1|^s \big]\E_{u}\big[  |G_2|^s\big]  \, \Big]\\
    &=D_s^2 \E_{v,\wt v}\Big[\,  \E_{u}\big[|G_1|^s \big]\E_{\wt u}\big[  |G_2|^s\big]  \, \Big],
\end{align*}
where we have rewritten the second integral $\E_{u}\big[  |G_2|^s\big]=\E_{\wt u}\big[  |G_2|^s\big]$ in terms of an independent copy of $V_\omega(u)$.
Notice that after renaming the random variables, $G_1$ still depends on $V_\omega$ at $u,v$ and is independent of $\wt V_\omega$ at $ u, v$, while $G_2$ depends on $\wt V_\omega$ at $u,v$ and is independent of $V_\omega$ at $u,v$. Hence, the last expectation can be rewritten to give
\begin{align*}
   \E_{u,v}\big[|G_1|^s\, |G_2|^s\big]\le     
     D_s^2\,  \E_{u, v}\big[   |G_1|^s  \big]\, \E_{\wt u ,\wt v}\big[  |G_2|^s\big]= D_s^2\, \E_{u, v}\big[   |G_1|^s  \big]\, \E_{ u , v}\big[  |G_2|^s\big],
\end{align*}
where in the last equality we renamed the variables back to $u,v$.

Finally, we apply the a priori bound \eqref{eqn:priori-xy} to $\E_{u,v}[|G_1|^s]$ and then take the expectation over the remaining variables $\{V_\omega(x)\}_{x\neq u,v}$. This implies
\begin{align*}
   \E \big[|G_1|^s\, |G_2|^s\big] \big]=\E_{\neq u,v}\Big[\E_{u,v}\big[|G_1|^s\, |G_2|^s\big] \Big]\le C\,D_s^2\, \E \Big[  |G_2|^s\big]    \Big] , 
\end{align*}
which proves \eqref{eqn:GS-decoup}. 
 
\end{proof}

\noindent\textbf{Acknowledgments.}
\phantomsection
\addcontentsline{toc}{section}{Acknowledgments}
W. Wang is supported in part by the National Key R\&D Program of China (2024YFA1012302). S. Zhang is supported by the NSF grant DMS-2418611.


{
  \bigskip
  \vskip 0.08in \noindent --------------------------------------

\footnotesize
\medskip
L.~Shou, {Joint Quantum Institute, Department of Physics, University of Maryland, College Park, MD 20742, USA}\par\nopagebreak
    \textit{E-mail address}:  \href{mailto:lshou@umd.edu}{lshou@umd.edu}

\vskip 0.4cm

  W. ~Wang, {SKLMS, Academy of Mathematics and Systems Science, Chinese Academy of Sciences, Beijing 100190, China}\par\nopagebreak
  \textit{E-mail address}: \href{mailto:ww@lsec.cc.ac.cn}{ww@lsec.cc.ac.cn}
  
\vskip 0.4cm

S.~Zhang, {Department of Mathematics and Statistics, University of Massachusetts Lowell, 
Southwick Hall, 
11 University Ave.
Lowell, MA 01854
 }\par\nopagebreak
  \textit{E-mail address}: \href{mailto:shiwen\_zhang@uml.edu}{shiwen\_zhang@uml.edu}
}

\end{document}